\newtheorem{proposition}{Proposition}
\newtheorem{proof}{Proof}
\newcommand{\removelatexerror}{\let\@latex@error\@gobble}
\def\BibTeX{{\rm B\kern-.05em{\sc i\kern-.025em b}\kern-.08em
    T\kern-.1667em\lower.7ex\hbox{E}\kern-.125emX}}
\newcommand{\del}[1]{\st{#1}} 
\newcommand{\com}[1]{\textbf{\color{red} (COMMENT: #1)}} 
\newcommand{\response}[1]{\textbf{\color{blue} (RESPONSE: #1)}} 
\newcommand{\del}[1]{}
\newcommand{\com}[1]{}
\newcommand{\comg}[1]{}
\newcommand{\response}[1]{}
\begin{document}

\title{Impact and Analysis of Space-Time Coupling on Slotted MAC in UANs}

\author{Yan Wang,
Quansheng Guan,~\IEEEmembership{Senior Member,~IEEE, }
Fei Ji,~\IEEEmembership{Member,~IEEE, } 
Weiqi Chen
\IEEEcompsocitemizethanks{\IEEEcompsocthanksitem Y. Wang, Q. Guan, and F. Ji are with the School of Electronic and Information Engineering, South China University of Technology, Guangzhou 510640, China\protect\\
\IEEEcompsocthanksitem W. Chen is with the School of Internet Finance and Information Engineering, Guangdong University of Finance, Guangzhou 510000, China.}
}

\maketitle

\begin{abstract}


The propagation delay is non-negligible in underwater acoustic networks (UANs) since the propagation speed is five orders of magnitude smaller than the speed of light. In this case, space and time factors are strongly coupled to determine the collisions of packet transmissions. To this end, this paper analyzes the impact of space-time coupling on slotted medium access control (MAC). 
We find that a sending node has specific location-dependent interference slots and slot-dependent interference regions. Thus, the collisions may span multiple slots, leading to both inter-slot and intra-slot collisions. Interestingly, the slot-dependent interference regions could be an annulus inside the whole transmission range. It is pointed out that there exist collision-free regions when a guard interval larger than a packet duration is used in the slot setting. In this sense, the long slot brings spatial reuse within the transmission range. However, we further find that the guard interval is not larger than a packet duration, which is much shorter than the existing slot setting in typical Slotted-ALOHA, to reach a peak successful transmission probabilities and throughput. 
Simulation results verify our findings, and also show that the performance of vertical transmissions is more sensitive to the spatial impact than horizontal transmissions in UANs.
\end{abstract}

\begin{IEEEkeywords}
Collision analysis, space-time coupling, medium access control, underwater acoustic networks.
\end{IEEEkeywords}

\section{Introduction}
{U}{nderwater} acoustic networks (UANs) are envisioned widely for oceanic monitoring, fisheries activities, ecological protection, and other commercial or scientific applications, which require multiple underwater sensors to sense the underwater environment and report the data to the sink on the water surface \cite{Ref-UAN-Phy,Ref-UANs,Ref-SmartOcean}. When the data packets from sensor nodes arrive at a receiver simultaneously, these packets will collide with each other and their reception fails. In this case, medium access control (MAC) is critical to enable multiple sensor nodes to share the open channel in UANs. 

The low speed of acoustic waves and the long propagation delay make the MAC in UANs different from that in terrestrial radio networks (TRNs) \cite{Mandar-Throughput,Ref-Jiang,Ref-2021-Survey}. The speed of acoustic waves (i.e., 1500 m/s) is five orders of magnitude slower than the speed of radio waves (i.e., $3\times 10^8$ m/s). In TRNs, the propagation delay of radio waves can be neglected, and the arriving time of a packet is approximated to its sending time. The principle for MAC is to avoid concurrent sending in order to guarantee collision-free transmissions. However, the low speed of underwater acoustic waves introduces a non-negligible propagation delay to packet transmissions in UANs \cite{Ref-Jiang}. The packet arriving time at the receiver is then determined by both spatial distances between nodes and sending times of packets \cite{Ref-ST-Coupling}. We can observe on the one hand that packets who are simultaneously sent from different nodes might not arrive at the same time at the receiver. On the other hand, the packets might arrive at the same time although their sending times are different. To this end, the MAC strategies in TRNs like CSMA, CDMA, etc. are not sufficient to avoid the space-time coupling collisions in UANs \cite{Ref-BiC-MAC,Ref-ST-MAC,Mandar-Variable-Duration,Ref-LiuMeiyan,Ref-QlearningMAC,Ref-UW-SEEDEX}.

Such space-time coupling in UANs introduces not only time uncertainty but also space uncertainty to random access based MAC \cite{Ref-Affan}.
Since the generation of the data is often considered as a random process in communication networks \cite{Data-Network,Ref-Ahn}, the sending time is uncertain to the destination node. 
Similarly, the positions of nodes are not available in UANs due to the lack of positioning systems, and the sensor nodes are considered as randomly deployed \cite{Ref-Radom-Network,Ref-Zhong}. The spatial propagation delays are also uncertain to the destination node. Due to the ignorable propagation delay, the MAC protocols in TRNs consider only the time uncertainty. 

The slotting technique could be used to alleviate the time uncertainty. It divides the channel time into slots, and nodes are only allowed to access the channel at the beginning of a slot. By ignoring the propagation delay, the slot length is often set to the duration time of one packet transmission in TRNs. It has been proved that the slotting technique could almost double the network throughput of TRNs \cite{Data-Network}. 

Unfortunately, it was reported in literature that the slotting technique does not improve the performance of MAC in UANs \cite{Ref-Vieira}. 
Due to the long propagation delay, the transmission in the current slot may arrive at the following several slots at the receiver, which brings \emph{inter-slot collisions} to UANs. 
Furthermore, the propagation delays between nodes are uncertain, i.e., space uncertainty. An extra guard interval is demanded in a slot to accommodate this kind of space uncertainty. 
To ensure inter-slot collision-free transmissions for all the nodes in the network, the slot length has to be one packet duration plus the maximal propagation delay between nodes as a guard interval  \cite{Ref-Ahn,Ref-LiuMeiyan,Ref-QlearningMAC}. 
Considering the long communication range and low propagation speed in UANs, the maximal propagation delay is always multiple times of a packet duration \cite{Ref-UW-SEEDEX} which is much longer than the slot length in TRNs (i.e., the duration of a packet transmission). 


Although a long slot can decouple space and time, it degrades the MAC performance. 
On the one hand, a longer slot will accumulate more data packets that have to compete to access the next slot, aggravating the channel competition. The \emph{intra-slot collision} probability will increase in this case. 
On the other hand, a long slot means a long idle waiting, which further decreases the sending opportunities and the throughput of MAC. 

It is noticed that the concurrent transmissions may not lead to collisions at the receiver due to the space-time coupling feature of UANs \cite{Ref-BiC-MAC,Ref-ST-MAC,Mandar-Variable-Duration}. 
In this sense, the existing slot length setting (i.e., one packet duration plus the maximal propagation delay) is unnecessary protection from transmission collisions. 
It is possible to shorten the slot length to increase the MAC performance intuitively. However, the slot length setting and collision relate strongly to the space-time coupling factors, such as the guard interval, transmission duration, setting time, node positions, etc. Therefore, this paper wants to understand the impact of space-time coupling and understand the optimal slot setting for slotted MAC. 

The contributions and findings of this paper are summarized as follows. 
\begin{itemize}
\item \emph{Analysis of space-time coupling collisions:} We derive the collision span of slots and the slot-dependent interference regions in UANs. It is found that the interference region could be an annuls for a slot. Therefore, collision-free regions may exist in the transmission range of a receiver, and space reuse is possible to accommodate concurrent transmissions. 
\item \emph{Closed-form expressions for successful transmission probability and throughput:}  We establish the relationship between successful transmission probability and the coupling factors. We find that the successful transmission probabilities and throughput are the same for the slot lengths of one packet duration and of two packet durations. Therefore, the maximizer of the slot length exists in between as shown in the simulations. We also find that the spatial factor has a greater impact on vertical transmissions than horizontal transmissions in UANs.
\end{itemize}

The remainder of this paper is arranged as follows. Section \ref{sect:related} overviews the related work, and Section \ref{sect:system} presents the system model and the motivation of this work. The space-time coupling collisions are discussed in Section \ref{sect:collisions}. The successful transmission probability and throughput are then analyzed in Section \ref{sect:perf}. Simulation results using NS3 are used to verify our analysis in Section \ref{sect:sim}. Finally, Section \ref{sect:con} concludes this paper. 

\section{Related work}\label{sect:related}
The idle waiting is the major challenge brought by the long propagation delay in the MAC design for UANs.
Many efforts tried to exploit the idle waiting to increase the channel utilization.
Handshaking is often used to reach a consensus among a group of nodes to reuse idle waiting.
Bidirectional Concurrent MAC (BiC-MAC) enabled bidirectional concurrent transmissions which allow the receiver to transmit packets during idle waiting \cite{Ref-BiC-MAC}. 
In the Reverse Opportunistic Packet Appending (ROPA) protocol, the handshake's initiator invites its neighbors to opportunistically transmit their packets after the initiator's transmission \cite{Ref-ROPA}.

Transmission scheduling is another efficient method to minimize the idle waiting.
Hus \emph{et al.} used the delay information to construct a Spatial-Temporal Conflict Graph (ST-GC) to describe the conflict delays among links.
The MAC problem was transformed into a vertex coloring problem of ST-GC and the optimal scheduling could be obtained \cite{Ref-ST-MAC}.
Anjangi \emph{et al.} formulated the scheduling into a Mixed Integer Linear Fractional Programming (MILFP) problem, which adjusts the sending time and packet duration to minimize the frame length \cite{Mandar-Variable-Duration, Mandar-Throughput}.
Since a packet may propagate for several scheduling cycles, Chen \emph{et al.} further considered the collisions among scheduling cycles in the formulated mixed-integer linear programming
problem \cite{Ref-ST-Coupling}.

Analysis was also conducted to understand the impact of long propagation delay on underwater MAC.
The simulation results showed that the throughput of Slotted-ALOHA is the same as Pure-ALOHA due to the long propagation delay in UANs \cite{Ref-Vieira}, while the throughput of Slotted-ALOHA is double compared to that of Pure-ALOHA in TRNs \cite{Data-Network}. 
Ahn \emph{et al.} pointed out that it is the space uncertainty introduced by the varying propagation delays that degrade Slotted-ALOHA in UANs, and analyzed the throughput in a case that the packet transmission duration is larger than the propagation delay in \cite{Ref-Ahn}. 
However, it is noticed that the propagation delay over long-range underwater acoustic links is potentially larger than the transmission duration in practical UANs, and the propagation may span several slots.  

Other MAC protocols, like floor-acquisition multiple access (FAMA) and distance-aware collision avoidance protocol (DACAP), tried to apply the strategy of multiple access with collision avoidance (MACA) in TRNs into UANs.  Shahabudeen \emph{et al.} used a Markov-chain model to analyze the impact of batch queuing on the MACA and gave closed-form expressions for the mean service time and throughput \cite{Ref-MACA-analysis}. 
Zhu \emph{et al.} studied the impact of the low transmission rates and long preambles of the real acoustic modem on the random access-based MAC and handshake-based MAC.
In order to capture the effects of space-time uncertainty, Guan \emph{et al.} purposed a statistical physical interference model which is based on the observed interferences over passed slots \cite{Ref-Statistic-Interference}. 
This statistical model was validated by testbed experiments and used by each node to distributively select an optimal access probability, which increases the sum-throughput over multiple slots.
Chen \emph{et al.} further revealed that the collisions of  space-time coupling transmissions in UANs are different from that in TRNs \cite{Ref-ST-Coupling}. 
It was shown that there exist possibly multiple concurrent successful transmissions due to the coupling in space and time. 

This paper further discusses the spatiotemporal impact on slotted-MAC in UANs. 
We try to answer the questions that whether space-time coupling brings performance gain to slotted MAC and what is the optimal slot length to maximize the successful transmission probability and throughput.

\section{System Model and Motivation}\label{sect:system}
This section describes the system model and the motivation of our work.
\subsection{System Model}
We consider a network that has one sink (e.g., a buoy on the water surface) to collect data from $N$ underwater sensor nodes. 
Let $\mathbf{N}=\{1, {2}, \ldots, {N}\}$ denote the set of sensor nodes. 

Underwater sensor nodes adopt a slotted MAC to share the open channel. 
Packets that arrive in current slot are stored in the buffer and sent at the beginning of the next slot. 
Each slot accommodates one packet duration and one additional guard interval. 
Let $t_{slot}$ denotes the duration of a slot. The expression of the slot length is written by 
\begin{equation}
t_{slot}=t_{f}+\beta\!\cdot\!\tau, 
\label{equa-general-slot-length}
\end{equation}  
where $t_f$ is the transmission duration for a packet, $\tau$ denotes the maximal propagation delay, and $\beta$ denotes the guard coefficient, which is the ratio of the guard interval to the maximal propagation delay.

\subsection{Motivation}\label{sect: motivation}
The length of a slot affects the network performance directly. Let us look at two cases at $\beta \geq1$ and $\beta<1$, respectively. 

When $\beta \geq1$, the slot length is not shorter than the transmission time (i.e., $t_f$) plus the maximum propagation time (i.e., $\tau$). 
In this case, one packet can be transmitted from the sender to its receiver within one slot. 
According to the principle of slotted MAC that packets are only allowed to be sent at the beginning of a slot, collisions are impossible for any packets that are sent in different slots. 
Collisions may happen only for packets sent in the same slots, which is called \emph{intra-slot collisions}. 

It would be much more complicated for the case of $\beta<1$. 
A slot length having $\beta<1$ cannot accommodate the complete transmission of a packet, and a packet transmission may propagate across several slots. 
In addition to intra-slot collisions, packets from different slots may also collide with each other, i.e., \emph{inter-slot collisions}, as shown in Fig. \ref{fig-motivation-examples2}, which is the main reason that degrades Slotted-ALOHA to Pure-ALOHA \cite{Ref-Vieira}. 

\begin{figure}
  \centering
  \subfloat[Slotted-MAC with $\beta\geq 1$]{
    \label{fig-motivation-examples1}
  \includegraphics[width=0.8\columnwidth]{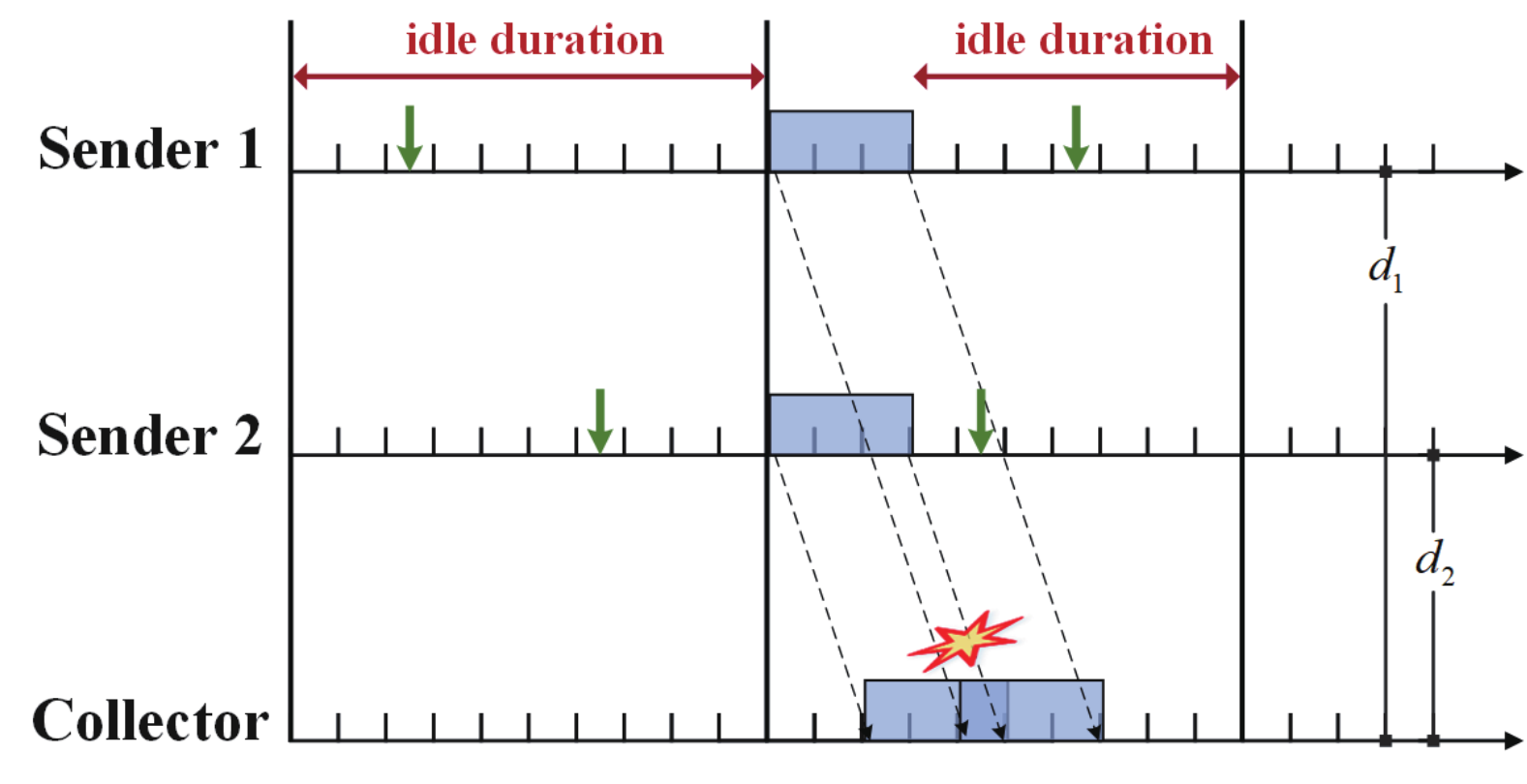}}
\\
  \subfloat[Slotted-MAC with $\beta<$1]{
    \label{fig-motivation-examples2}
  \includegraphics[width=0.8\columnwidth]{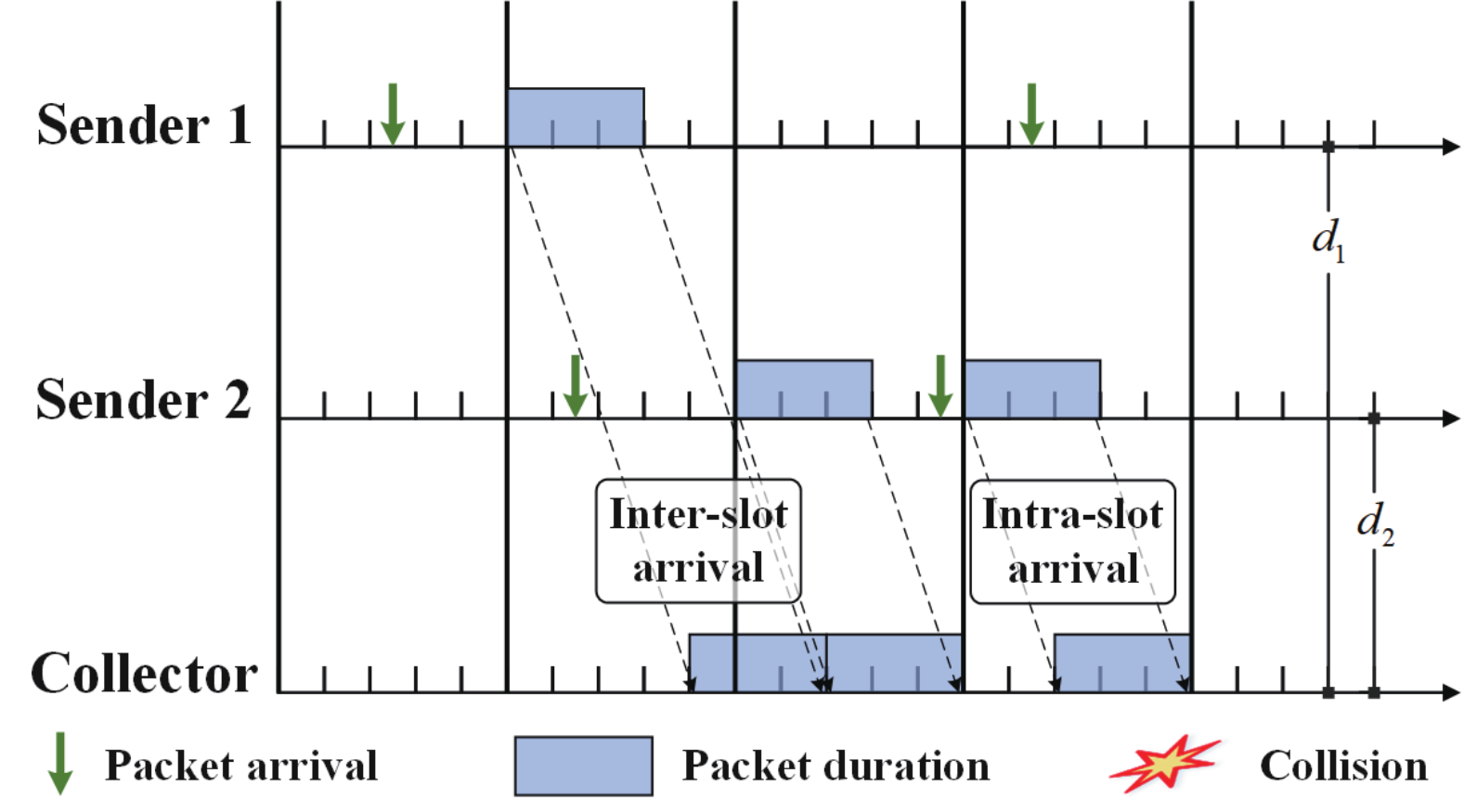}}
  \caption{Transmission collisions in slotted-MAC. (a) When $\beta\geq 1$, only intra-slot collisions exist for packets sent from two senders at distances of $d_1$ and $d_2$, respectively). However, more packets will be accumulated in one slot, compared to the case of $\beta<$1. (b) Both intra-slot and inter-slot arrivals may not exist without collision when $\beta<$1. 
}
  \label{fig-motivation-examples}
\end{figure}

A relatively long slot, e.g., $\beta\geq1$, eliminates inter-slot collisions by the long guard interval. However, Fig. \ref{fig-motivation-examples1} shows that the long slot length may still decrease MAC performance in two aspects. Since transmissions are only allowed at the beginning of a slot, a long guard interval means a long idle waiting between packet transmissions, which decreases the transmission rate. 
In addition, although the long slot eliminates inter-slot collisions, it aggravates intra-slot collisions. 
More packets may arrive during a longer slot and will be accumulated to compete for channel access at the next slot. 
Thus, the intra-slot collision probability will increase for these backlogged packets. 
The low transmission rate and the high collision probability will finally lead to low network throughput. 

Although a slot having $\beta\geq 1$ could eliminate space uncertainty and inter-slot collisions, the space-time coupling slot setting having $\beta<1$ may achieve better performance. The space-time coupling feature in UANs motivates our study in this paper. The notations in Tab. \ref{table-notations} are used in our discussions. 
\begin{table}[!ht]
  \caption{The notations in the discussion.}
  \centering
  \setlength{\tabcolsep}{0.55mm}{
  \begin{tabular}{c|c} 
  	\hline 
  	\textbf{Symbols} & \textbf{Meaning}\\
  	\hline
  	$N$ & The number of sensors\\
  	\hline
  	$\mathbf{N}$ & The set of sensors\\
  	\hline
  	$t_{f}$ & The packet duration\\
  	\hline
  	$t_{slot}$ & The slot duration\\
  	\hline
  	$\tau$ & The maximal propagation delay  in the horizontal plane\\
  	\hline
  	$\beta$ & The guard coefficient\\
  	\hline
  	$t_{i}$ & The sending time of sensor $i$\\
  	\hline
  	$d_{i}$ & The distance from sensor $i$ to the sink\\
  	\hline
  	$v$ & The propagation speed of underwater acoustic wave\\
  	\hline
  	$\Delta m$ & The sending slot of an interference node\\
  	\hline
  	$M$ & The maximal number of interference slots\\
  	\hline
  	$\mathbf{M}$ & The set of slot difference to possible interfering slots\\
  	\hline
  	$R$ & The maximal communication range in the horizontal plane\\
  	\hline
  	$K$ & The amount of distance segments\\
	\hline
	$\mathbf{D_{k}}$ & The $k$-th distance segment \\
  	\hline
  	$\mathbf{D}$ & The set of distance segments\\
  	\hline
  	$\mathbf{C_{k}}$ & The set of interference slots for a tagged node in $\mathbf{D_{k}}$\\
  	\hline
  	$\lambda$ & The packet arrival rate per node\\
  	\hline
  	$\alpha$ & The expanding coefficient for the vertical communication range\\
    \hline
    $\text{IR}_i^{\Delta m}$ & The interference region of slot $\Delta m$ for a node $i$ in $\mathbf{D_{k}}$\\
    \hline
    $S_{1,k}^{\Delta m}$ & The area of $\text{IR}_i^{\Delta m}$ for node $i$\\
    \hline
    $S_{2,k}^{\Delta m}$ & The area of DIR of Slots $\Delta m$ and $\Delta m+1$ for node $i$\\
    \hline
    $P_{z,k}$ & The probability that an interference node falls in IRs of node $i$\\ 
    \hline
    $P_{o,k}$ & The probability that an interference node falls in DIRs of node $i$\\ 
    \hline
    $P_{t}$ & The sending probability of a node\\
    \hline
    $P_{k}$ & The probability that a node locates in $\mathbf{D_{k}}$\\
    \hline
  \end{tabular}}
  \label{table-notations}
\end{table}

\section{Space-Time Coupling Collisions in Slotted MAC}\label{sect:collisions}
We adopt the protocol transmission model to describe whether packet transmissions collide  \cite{Ref-ST-Coupling, Ref-Zhong, Gupta-WN-capacity}.
Suppose node ${i}$ sends a packet at time $t_{i}$, and $d_{i}$ denotes the distance between node ${i}$ and the sink.
The transmission of node ${i}$ does not collide with the transmission of node ${j}$ (${j} \in \mathbf{N}/\{i\}$) at the sink only when it satisfies  
\begin{equation}
\left|t_{i}-t_{j}+\frac{d_{i}-d_{j}}{v}\right| \geq t_f, 
\label{equa-protocol-model}
\end{equation}  
where $d_{j}$ is the distance between node ${j}$ and the sink, and node ${j}$ sends a packet at time $t_{j}$. According to \eqref{equa-protocol-model}, it depends on the sending slots (i.e., $t_i$ and $t_j$), sender locations (i.e., $d_i$ and $d_j$), and packet duration (i.e., $t_f$) to determine whether two transmissions will collide. 

According to the slotting technique, the difference between $t_{i}$ and $t_{j}$ should satisfy
\begin{equation}
  t_{i}-t_{j}= \Delta m{\cdot}t_{slot}, \Delta m \in \mathbb{Z}.
  \label{equa-time-differences}
\end{equation}
When node $i$ and node ${j}$ transmit packets at the same slot, $\Delta m$ is equal to 0; otherwise, $\Delta m$ is a non-zero integer. When $\Delta m<0$, node $j$ sends at a slot after node $i$; while $\Delta m>0$, node $j$ sends at a slot before node $i$.


Due to the long propagation delay, the collision period may cross over several time slots. 
\begin{proposition}
The collision period of packet transmissions spans $2M+1$ slots at maximum in slotted MAC, where $M=\lceil \frac{\tau+t_{f}}{t_{slot}} \rceil-1$. For example, a transmission at the current slot (i.e., $\Delta m=0$) may collide with the transmissions in the slots $\Delta m\in \mathbf{M}$ and $\mathbf{M}= \{-M, \ldots,0, \ldots,M\}$.
\label{proposition1}
\end{proposition}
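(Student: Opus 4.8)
The plan is to turn the geometric collision condition \eqref{equa-protocol-model} into a constraint on the slot index $\Delta m$ and then count the integers satisfying it. Negating \eqref{equa-protocol-model}, the transmissions of nodes $i$ and $j$ collide at the sink exactly when $\left|t_i - t_j + \tfrac{d_i-d_j}{v}\right| < t_f$. Substituting the slotting relation \eqref{equa-time-differences}, $t_i - t_j = \Delta m\cdot t_{slot}$, the collision condition becomes $\left|\Delta m\cdot t_{slot} + \tfrac{d_i-d_j}{v}\right| < t_f$. So the question ``which slots can interfere with the current slot'' is precisely: for which integers $\Delta m$ does there exist an admissible pair of node distances making this inequality hold.

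Next I would bound the differential propagation delay. Since both nodes lie within the communication range, $d_i/v$ and $d_j/v$ each lie in $[0,\tau]$, hence $\left|\tfrac{d_i-d_j}{v}\right|\le\tau$, and as the node positions range over the network this term sweeps the whole interval $[-\tau,\tau]$. Therefore, for a fixed $\Delta m$, the attainable values of $\Delta m\cdot t_{slot} + \tfrac{d_i-d_j}{v}$ fill the interval $[\Delta m\cdot t_{slot}-\tau,\ \Delta m\cdot t_{slot}+\tau]$, and a collision is possible for some placement if and only if this interval meets the open interval $(-t_f,t_f)$. Comparing the endpoints of the two intervals, this intersection is nonempty iff $\Delta m\cdot t_{slot} - \tau < t_f$ and $\Delta m\cdot t_{slot} + \tau > -t_f$, i.e. iff $|\Delta m|\cdot t_{slot} < \tau + t_f$, that is $|\Delta m| < \tfrac{\tau+t_f}{t_{slot}}$.

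Finally I would pass from this real inequality to the integer count. The largest integer strictly less than a positive real $r$ is $\lceil r\rceil - 1$, whether or not $r$ is itself an integer; taking $r = \tfrac{\tau+t_f}{t_{slot}}$ gives $M = \lceil\tfrac{\tau+t_f}{t_{slot}}\rceil - 1$. Hence the only slot differences that can produce a collision are $\Delta m\in\{-M,\ldots,0,\ldots,M\} = \mathbf{M}$, so the collision period spans at most $2M+1$ slots; and by choosing node distances that drive $\tfrac{d_i-d_j}{v}$ arbitrarily close to $\pm\tau$ one realizes every value of $\Delta m$ in $\mathbf{M}$, making $2M+1$ the exact maximum. (As a sanity check, $\beta\ge1$ forces $t_{slot}\ge t_f+\tau$, hence $M=0$ and a one-slot span, matching the intra-slot-only regime of Section~\ref{sect: motivation}.)

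The only delicate point is the boundary bookkeeping: keeping the strict inequality of \eqref{equa-protocol-model} consistent through the interval-intersection step and through the ceiling identity, so that the borderline case $\tfrac{\tau+t_f}{t_{slot}}\in\mathbb{Z}$ — where the extreme placement yields $\left|\Delta m\cdot t_{slot}+\tfrac{d_i-d_j}{v}\right| = t_f$ exactly, i.e.\ \emph{no} collision — is correctly excluded. Everything else is a routine comparison of intervals.
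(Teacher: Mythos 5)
Your proposal is correct and takes essentially the same route as the paper's own proof: negate \eqref{equa-protocol-model}, substitute $t_i-t_j=\Delta m\cdot t_{slot}$, bound the differential propagation delay by $\tau$ to obtain $|\Delta m|<\frac{\tau+t_f}{t_{slot}}$, and convert to the integer bound via $M=\lceil\frac{\tau+t_f}{t_{slot}}\rceil-1$. Your version is merely tidier in the boundary bookkeeping (the case $\frac{\tau+t_f}{t_{slot}}\in\mathbb{Z}$) and adds a realizability remark showing every $\Delta m\in\mathbf{M}$ can actually occur, which the paper leaves implicit.
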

\begin{proof}
According to \eqref{equa-protocol-model} and \eqref{equa-time-differences}, two transmissions will collide with each other when 
\begin{equation}
  \left| \Delta m\cdot t_{slot}+\frac{d_{i}-d_{j}}{v}\right|< t_{f}.
  \label{equa-protocol-model-proof1}
\end{equation}
Then, we have  $\frac{-\frac{d_{i}-d_{j}}{v} - t_{f}}{t_{slot}} \!<\!\Delta  m$ or $\Delta m \!<\! \frac{\frac{d_{i}-d_{j}}{v} + t_{f}}{t_{slot}}$. Based on the definition of the maximum propagation delay, i.e., $|\frac{d_{i}-d_{j}}{v}|<\tau$, we have $\frac{-\tau - t_{f}}{t_{slot}} < \Delta m$ or $\Delta m < \frac{\tau + t_{f}}{t_{slot}}$. Let $M=\lceil \frac{\tau+t_{f}}{t_{slot}} \rceil-1$, we get $-M\leq \Delta m\leq M$, i.e., $\Delta m\in \mathbf{M}$. 
\IEEEQEDhere
\end{proof}

In Proposition \ref{proposition1}, $\Delta m=0$ corresponds to intra-slot collisions, while $\Delta m \neq 0$ corresponds to inter-slot collisions. 
Proposition \ref{proposition1} verifies our motivation in Section \ref{sect: motivation}. Based on Proposition \ref{proposition1}, when $\beta\geq 1$, $M=0$. The collisions can only happen in the current transmission slot, i.e, only intra-slot collisions exist. However, since $|\frac{d_{i}-d_{j}}{v}|<\tau$, many nodes have to idly wait for the end of the current slot to send packets. When $\beta< 1$, we have $M\neq 0$, and both inter-slot and intra-slot collisions exist. However, using a short slot length, the packet sending rate is increased and the waiting time is decreased. 


We next take an insight into the \emph{interference regions} (IRs) for space-time coupling transmissions.
\begin{proposition} 
For any sending node at a distance of $d_i$ away from the sink, its interference nodes may locate at the IRs in Fig. \ref{fig_horizontal_conflicit_region}.
\label{proposition2}
\end{proposition}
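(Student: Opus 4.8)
The plan is to turn the algebraic non-collision criterion \eqref{equa-protocol-model} into a geometric description of where an interfering transmitter may sit, handled one admissible slot offset at a time. First I would substitute the slotting relation \eqref{equa-time-differences} into \eqref{equa-protocol-model} to obtain, for a fixed integer offset $\Delta m$, the collision condition \eqref{equa-protocol-model-proof1}, i.e. $|\Delta m\, t_{slot} + (d_i-d_j)/v| < t_f$. By Proposition \ref{proposition1}, only $\Delta m \in \mathbf{M}$ can produce a collision, so it suffices to treat these $2M+1$ offsets.

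Next, fixing $d_i$ and one such $\Delta m$, I would solve this inequality for the interferer distance $d_j$, which yields an open interval of width $2vt_f$ centered at $d_i + v\,\Delta m\, t_{slot}$:
\begin{equation}
d_i + v\,\Delta m\, t_{slot} - v t_f < d_j < d_i + v\,\Delta m\, t_{slot} + v t_f .
\label{equa-prop2-interval}
\end{equation}
Intersecting with the feasibility window $0 \le d_j \le R$ that every interferer must satisfy (it has to reach the sink), I would obtain an effective radial interval $[\,r^{\Delta m}_{\min},\, r^{\Delta m}_{\max}\,]$ with $r^{\Delta m}_{\min} = \max\{0,\, d_i + v\,\Delta m\, t_{slot} - v t_f\}$ and $r^{\Delta m}_{\max} = \min\{R,\, d_i + v\,\Delta m\, t_{slot} + v t_f\}$; offsets whose clipped interval is empty drop out for that $d_i$, which is precisely the source of the \emph{location-dependent interference slots}.

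Finally, since the protocol model \eqref{equa-protocol-model} depends on node $j$ only through $d_j$, the set of admissible interferer positions for offset $\Delta m$ is exactly the set of points whose distance to the sink falls in that interval, so $\text{IR}_i^{\Delta m}$ is an annulus centered at the sink. I would then classify its shape: a full disk when $r^{\Delta m}_{\min}=0$, a proper annulus lying strictly inside the transmission range when $0 < r^{\Delta m}_{\min}$ and $r^{\Delta m}_{\max} < R$, and an outer band when $r^{\Delta m}_{\max}=R$. The complete interference region of node $i$ is $\bigcup_{\Delta m \in \mathbf{M}} \text{IR}_i^{\Delta m}$, which reproduces Fig. \ref{fig_horizontal_conflicit_region}; as a side remark, consecutive annuli are centered $v t_{slot}$ apart with half-width $v t_f$, hence they separate into disjoint bands with collision-free rings in between exactly when $\beta\tau \ge t_f$, the \emph{spatial reuse} regime.

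The step I expect to be the main obstacle is the boundary bookkeeping of the clipping against $[0,R]$ together with the sign of $\Delta m$ — whether node $j$ sends before or after node $i$ shifts the band toward or away from the sink — so that one can state cleanly, for each $d_i$, which offsets contribute, which annuli degenerate to a disk or to an outer band, and hence which annuli are disjoint. This matters because the later area quantities $S^{\Delta m}_{1,k}$ and $S^{\Delta m}_{2,k}$ must be unambiguous. The geometric identification itself (interval in $d_j$ $\Leftrightarrow$ annulus) is immediate once one observes that the model's dependence on node $j$ is only through $d_j$, but this relies on the implicit assumption that no other aspect of node $j$'s position matters, which I would state explicitly.
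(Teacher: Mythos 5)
Your argument is correct and takes essentially the same route as the paper: substitute the slotted timing \eqref{equa-time-differences} into the collision condition to get \eqref{equa-protocol-model-proof1}, solve for $d_j$ to obtain an interval of half-width $vt_f$ centered at $d_i+v\,\Delta m\, t_{slot}$, clip it to $[0,R]$, and read off the disk/annulus cases — exactly \eqref{equa-possible-interfering-region} and the paper's case analysis. Only your closing side remark is slightly off: the collision-free gap between consecutive IRs has width $v(t_{slot}-2t_f)=v(\beta\tau-t_f)$, so it exists for $\beta\tau>t_f$ (strictly), not $\beta\tau\ge t_f$.
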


\begin{figure}
	\centering
	\subfloat[]{
	\includegraphics[width=0.22\columnwidth]{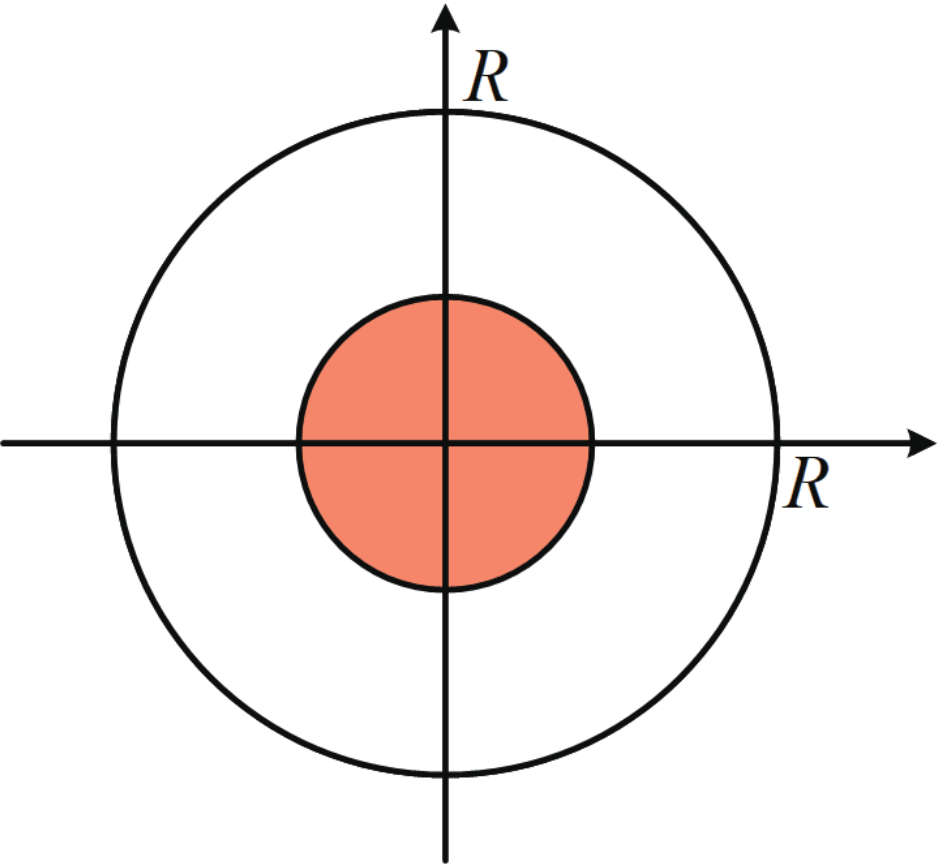}
	\label{fig-horizontal-a}}
	\hfill
	\subfloat[]{
	\includegraphics[width=0.22\columnwidth]{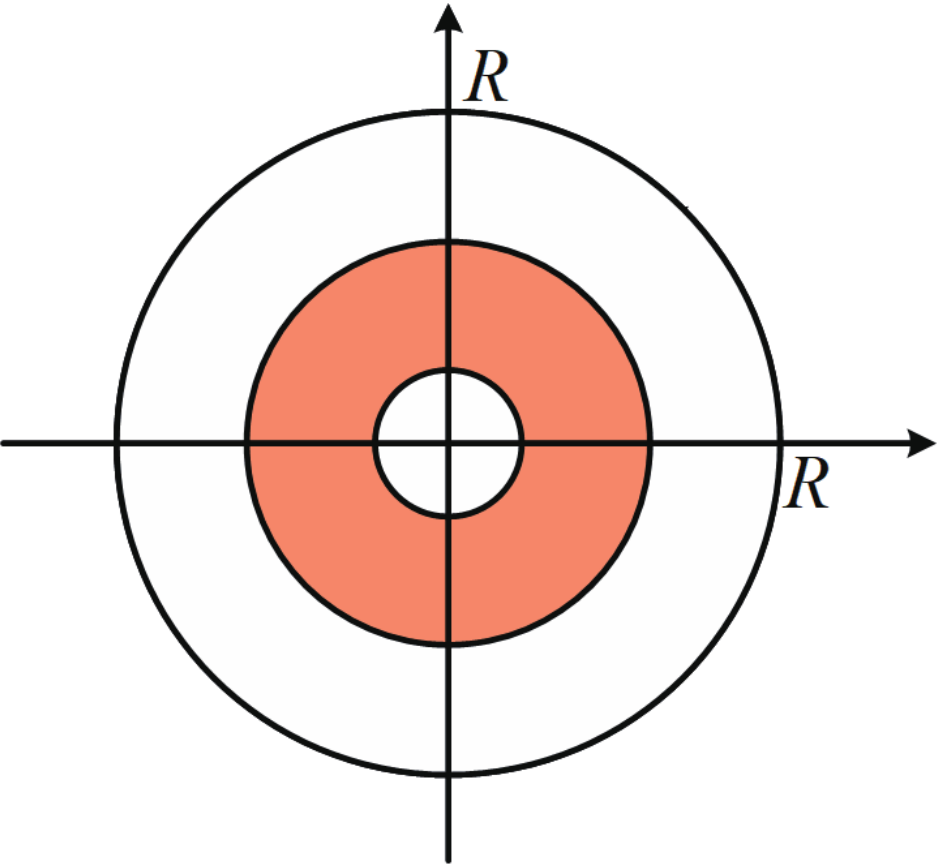}
	\label{fig-horizontal-b}}
	\hfill
	\subfloat[]{
	\includegraphics[width=0.22\columnwidth]{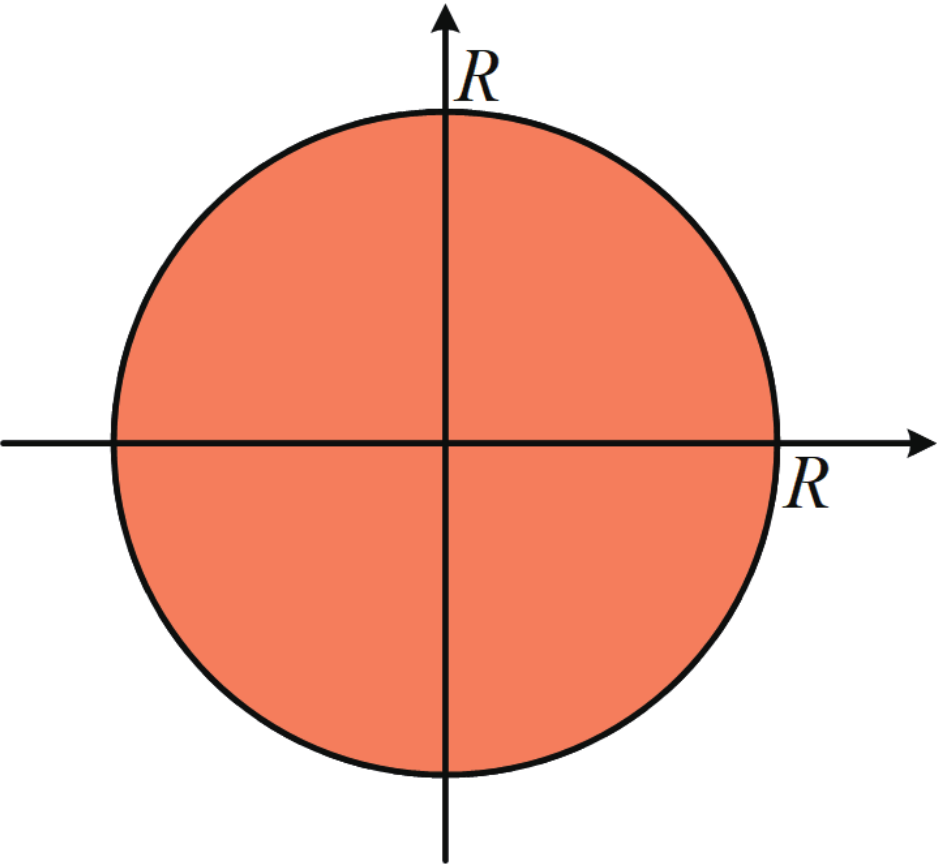}
	\label{fig-horizontal-c}}
	\hfill
	\subfloat[]{
	\includegraphics[width=0.22\columnwidth]{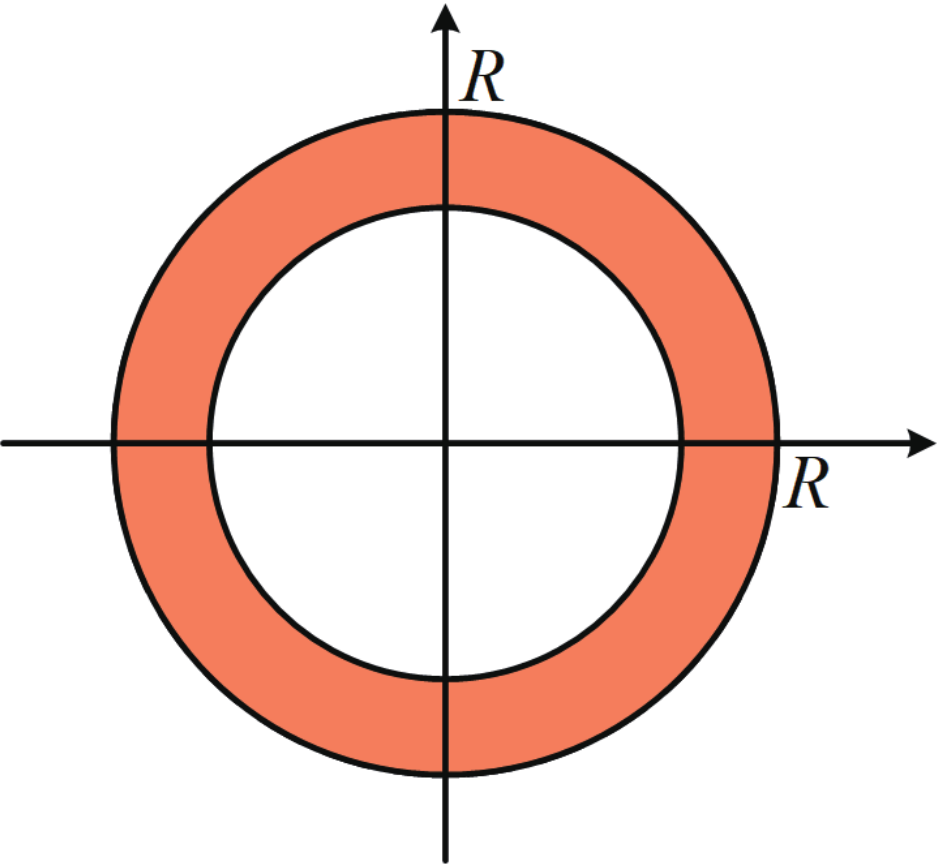}
	\label{fig-horizontal-d}}
	\caption{Four types of IRs for the space-time coupling collisions. 
	}
	\label{fig_horizontal_conflicit_region}
\end{figure}
\begin{proof}
Suppose the tagged node $i$ locates at $d_i$ and the coverage range of the sink is $R$. 
According to \eqref{equa-protocol-model-proof1}, the distances between the interference nodes and the sink for node $i$ should belong to  
\begin{equation}
  \text{IR}_i^{\Delta m} = [d_{i}+v(\Delta m\cdot t_{slot}-t_{f}), d_{i}+v(\Delta m\cdot t_{slot}+t_{f})]\cap [0, R].
  \label{equa-possible-interfering-region}
\end{equation}
The distance between an interference node and the sink should fall in $\text{IR}_i^{\Delta m}$, and the IRs belong to the following cases. 
\begin{itemize}
\item \emph{A circle}: When $d_{i}+v(\Delta m\cdot t_{slot}-t_{f})\leq 0$, the interference nodes situate in a circle around the sink with a radius of $d_{i}+v(\Delta m\cdot t_{slot}+t_{f})$ (c.f., Fig. \ref{fig-horizontal-a}). Specially, when $d_{i}+v(\Delta m\cdot t_{slot}+t_{f})>R$, all the nodes in the IRs can collide with node $i$ (c.f., Fig. \ref{fig-horizontal-c}). 
\item \emph{An annulus}: When $0< d_{i}+v(\Delta m\cdot t_{slot}-t_{f})\leq R$, only nodes in the annulus can collide with node $i$ (c.f., Fig. \ref{fig-horizontal-b} and Fig. \ref{fig-horizontal-d}), where the inside radius and the outside radius are $d_{i}+v(\Delta m\cdot t_{slot}-t_{f})$ and $\min (d_{i}+v(\Delta m\cdot t_{slot}+t_{f}), R)$, respectively. \IEEEQEDhere
\end{itemize}
\end{proof}

Proposition \ref{proposition2} shows that IRs depend on the sending slot (i.e., $\Delta m$). For example, when $\Delta m<0$, the colliding nodes locate around the sink within a circle. As $\Delta m$ increases, the IR becomes an annulus, which gets farther from the sink. Proposition \ref{proposition2} also shows that the radius of the circle or the width of the annulus is not larger than $2vt_{f}$. 

\begin{figure}
\centering
\includegraphics[width=\columnwidth]{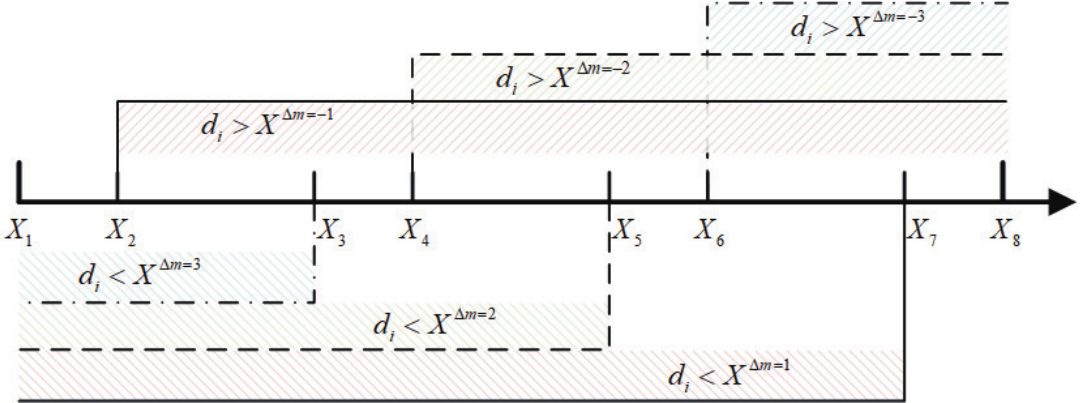}
\caption{Division of $\mathbf{D}$ according to \eqref{eq:Dk}. As an example, for $d_i \in \mathbf{D_{2}}=[X_2, X_{3}]$, its corresponding $\mathbf{C_{k}}=\{ -1,0,1,2,3 \}$.}
\label{fig_Dk_example}
\end{figure}

\begin{figure*}
    \centering
    \begin{minipage}[c]{1.5\columnwidth}
        \subfloat[]{
        \includegraphics[width=0.92\textwidth]{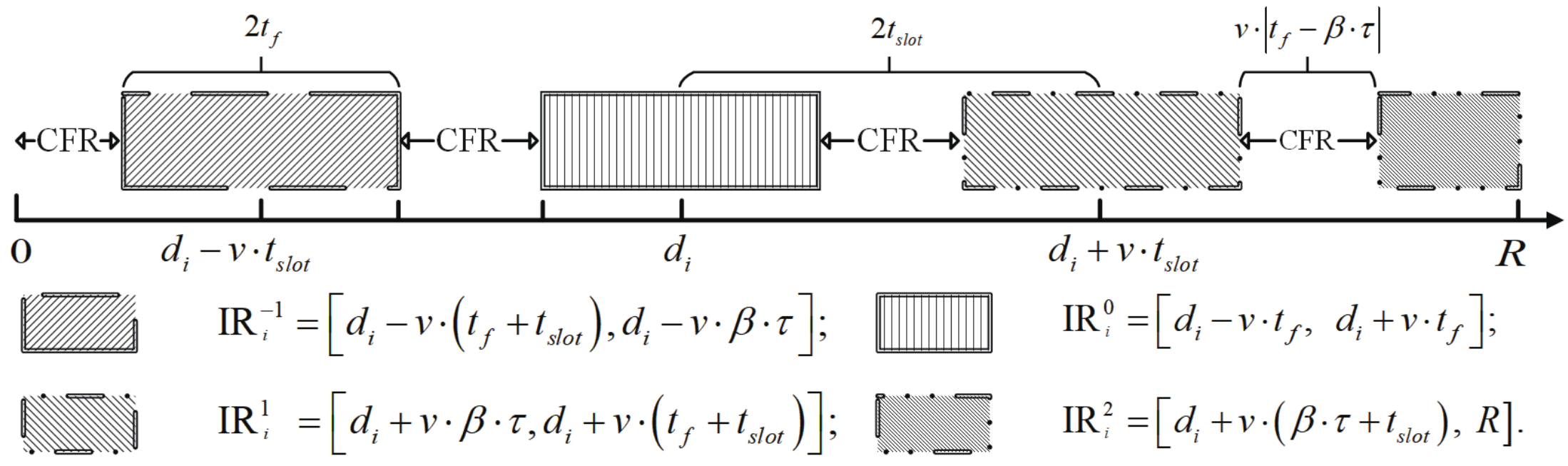}
        \label{fig-segement-demosration-small-pktduration}}
\\
        \subfloat[]{
        \includegraphics[width=0.92\textwidth]{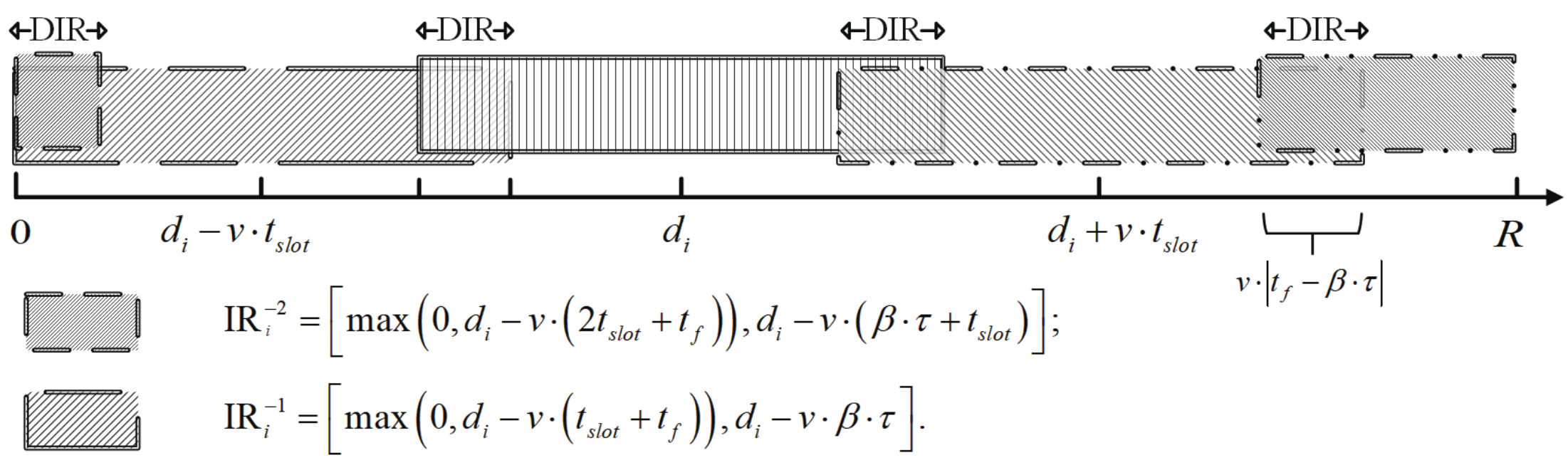}
        \label{fig-segement-demosration-big-pktduration}}
    \end{minipage}
    \begin{minipage}[c]{.5\columnwidth}
        \subfloat[]{
        \includegraphics[width=\textwidth]{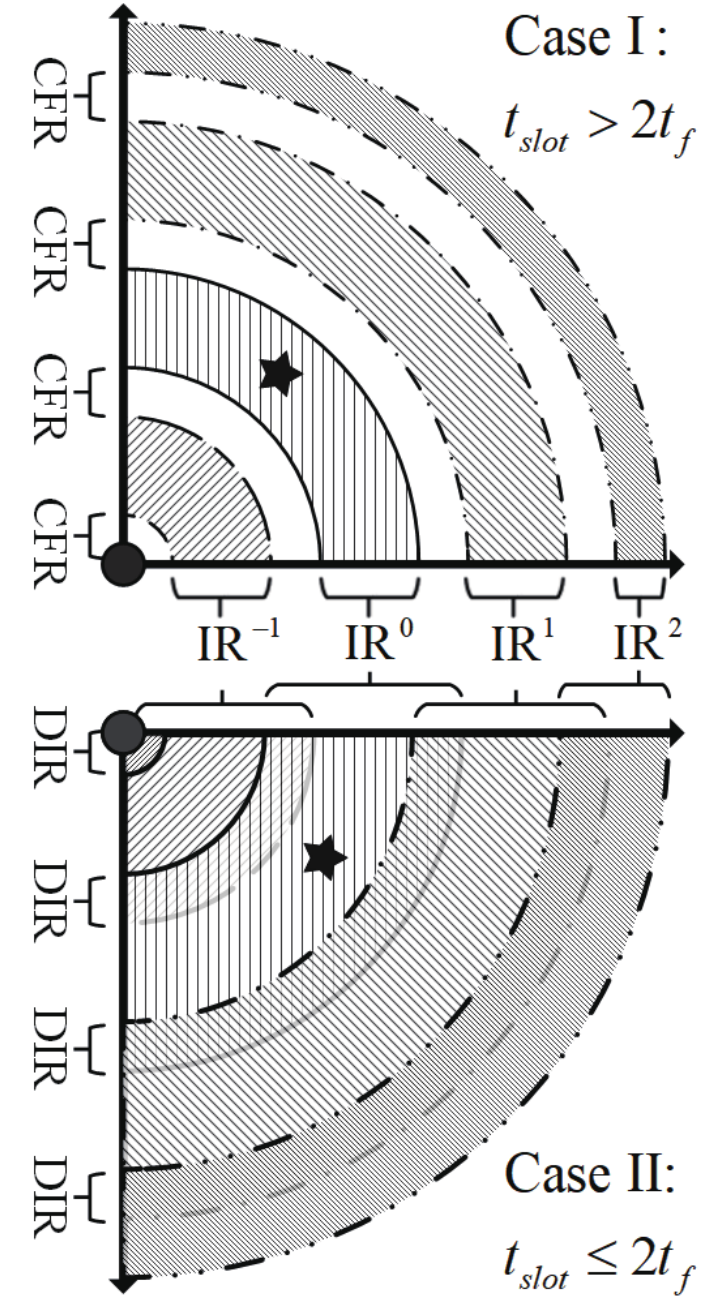}
        \label{fig-segement-demosration-cycle}}
    \end{minipage}
    \caption{IRs for different cases: (a) when $t_{slot}>2t_{f}  $, CFR between IRs exists; (b)  When $ t_{slot}\leq 2t_{f} $, DIRs exist due to the overlap of adjacent IRs. (c) The star stands for the tagged sending node. Each slot has a corresponding gray IR for this node. The blank gap between IRs is the CFR. The overlapped IR is the DIR. }
    \label{fig-interfering-segement}
\end{figure*}

Proposition \ref{proposition1} and Proposition \ref{proposition2} have shown that the collisions in UANs depend on the factors in both space and time dimensions. That is, a sending node has specific location-dependent interference slots and slot-dependent IRs. Therefore, we divide the communication range $[0,R]$ centered at the sink into $K$ segments, and nodes which locate at each segment have the same interference slots. Specifically, any nodes which locate at a segment $\mathbf{D_{k}}$ will receive collisions from other nodes which send packets in the slots of $\mathbf{C_{k}}$. Let $\mathbf{D}=\{\mathbf{D_{1}},\mathbf{D_{2}},\cdots, \mathbf{D_{K}}\}$ be the locations of sending nodes, and their possible interference slots are $\mathbf{C}=\{\mathbf{C_{1}},\mathbf{C_{2}},\cdots, \mathbf{C_{K}}\}$, where $\mathbf{C_{k}} {\subseteq}\mathbf{M}$, for $k\in \{1,2, \cdots, K\}$ according to Proposition \ref{proposition1}. 

According to \eqref{equa-possible-interfering-region}, an $\text{IR}_i^{\Delta m}$ is valid only when it satisfies the following conditions
\begin{gather}
d_i+\Delta m\cdot t_{slot}+t_f\geq 0, \Delta m<0, \label{eq:critical1}\\
d_i+\Delta m\cdot t_{slot}-t_f\leq R, \Delta m\geq 0.\label{eq:critical2}
\end{gather}
Now we can have the critical points to divide $\mathbf{D_{k}}$ as
\begin{equation}
X^{\Delta m}=\left\{\begin{matrix}
  -\Delta mt_{slot}-t_{f},&\Delta m<0 \\
  R-\Delta mt_{slot}+t_{f},&\Delta m > 0
\end{matrix}\right.
\end{equation}
Arranging $\{0, X^{\Delta m}, R  |  \Delta m\in  \mathbf{M}\}$ in an ascending order as $\{ X_k | k\in \{1,2, \cdots, K+1\}\}$, we have $\mathbf{D_{k}}$ as 
\begin{equation}\label{eq:Dk}
\mathbf{D_{k}}=[X_k, X_{k+1}]_0^R.
\end{equation}
The corresponding $\Delta m$ meeting \eqref{eq:critical1} and \eqref{eq:critical2} should belong to $\mathbf{C_{k}}$. We use Fig. \ref{fig_Dk_example} to exemplify the division of $\mathbf{D_{k}}$ and $\mathbf{C_{k}}$.

It is worth noting that the IRs may not cover the whole transmission range. In this sense, the nodes outside IRs (e.g., the white space in Fig. \ref{fig_horizontal_conflicit_region}) can reuse the channel with the tagged node $i$.  Concurrent transmissions are possible in UANs. 

Proposition \ref{proposition2} has revealed the IRs for a given slot. We next look at the spatial reuse opportunities.

\begin{proposition} 
When the slot length setting satisfies $t_{slot}>2t_f$, there exist \emph{collision-free regions} (CFRs) where a packet transmission at any slot will always not collide with an ongoing transmission. 
\label{proposition3}
\end{proposition}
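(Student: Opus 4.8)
The plan is to describe, for a fixed ongoing transmission by a tagged node $i$ at distance $d_i$ from the sink, the set of locations from which no concurrent transmission (in \emph{any} slot) can collide with it, and then show this set is non-empty precisely because $t_{slot}>2t_f$. Define
\[
\mathrm{CFR}_i \;=\; [0,R]\;\setminus\;\bigcup_{\Delta m\in\mathbf{M}}\text{IR}_i^{\Delta m}.
\]
By \eqref{equa-protocol-model-proof1} and Proposition \ref{proposition2}, a potential interferer transmitting with slot offset $\Delta m$ can collide with $i$ only if its distance to the sink lies in $\text{IR}_i^{\Delta m}$; hence a node at a distance in $\mathrm{CFR}_i$ fails the collision condition for every $\Delta m\in\mathbf{M}$ and never collides with $i$'s transmission. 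It therefore suffices to prove $\mathrm{CFR}_i\neq\emptyset$, i.e.\ that the interference regions do not tile the whole range.

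Next I would compute the gap between the interference regions of two consecutive slots. By \eqref{equa-possible-interfering-region}, before truncation to $[0,R]$ the region $\text{IR}_i^{\Delta m}$ is the interval $[\,d_i+v(\Delta m\,t_{slot}-t_f),\,d_i+v(\Delta m\,t_{slot}+t_f)\,]$, of width $2vt_f$, while $\text{IR}_i^{\Delta m+1}$ begins at $d_i+v((\Delta m+1)t_{slot}-t_f)$. When the relevant endpoints are not clipped by $0$ or $R$, the length of the gap between the two regions is
\[
\big(d_i+v((\Delta m+1)t_{slot}-t_f)\big)-\big(d_i+v(\Delta m\,t_{slot}+t_f)\big)=v\,(t_{slot}-2t_f),
\]
which is strictly positive exactly under the hypothesis $t_{slot}>2t_f$. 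This open annulus lies outside every $\text{IR}_i^{\Delta m}$, hence inside $\mathrm{CFR}_i$, and is the CFR claimed in the statement (it is the white gap drawn in Fig.~\ref{fig-segement-demosration-small-pktduration}); when $t_{slot}\le 2t_f$ adjacent regions instead overlap, so the condition is tight.

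The main thing to nail down — and the only real obstacle — is confirming that at least one such gap actually survives the intersection with the finite range $[0,R]$ rather than being entirely clipped away. Here I would use that $\text{IR}_i^{0}=[\,d_i-vt_f,\,d_i+vt_f\,]$ is anchored at $d_i\in[0,R]$, together with $R=v\tau$ where $\tau$ spans several packet durations, so that the gap immediately adjacent to $\text{IR}_i^{0}$ — taken on the side of increasing $\Delta m$ when $d_i$ lies in the lower part of the range and of decreasing $\Delta m$ otherwise — has both endpoints strictly inside $(0,R)$; the corner cases $d_i$ near $0$ or near $R$ are handled by choosing the slot offset on the appropriate side. Once this boundary bookkeeping is carried out, the displayed gap computation completes the argument, and the inequality $t_{slot}>2t_f$ alone supplies all the substantive content.
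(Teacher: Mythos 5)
Your proof is correct and follows essentially the same route as the paper: the paper's own proof simply reads off from Fig.~\ref{fig-interfering-segement} that when $t_{slot}>2t_f$ the consecutive IRs from Proposition~\ref{proposition2} do not overlap and leave blank space between them, which is exactly your explicit gap of width $v(t_{slot}-2t_f)$. Your additional bookkeeping that a gap survives truncation to $[0,R]$ (using $R=v\tau$ with $\tau$ spanning several packet durations) is a refinement the paper leaves implicit in the figure-based argument.
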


\begin{proof}
We use Fig. \ref{fig-interfering-segement} to prove Proposition \ref{proposition3}. Proposition \ref{proposition2} has identified the IRs for a given node $i$ which locates at $d_i$ away from the sink. Fig. \ref{fig-segement-demosration-small-pktduration} and Fig. \ref{fig-segement-demosration-big-pktduration} show the IRs of different sending slots. We can see that when $t_{slot}>2t_f$, the IRs of any slots will not overlap with each other and there is still some blank space between these IRs. In other words, any node located at the blank space can send its packets at any slot without collision with node $i$. 
\IEEEQEDhere
\end{proof}

The opportunities of spatial reuse in space-time coupling underwater acoustic channel are revealed by Proposition \ref{proposition3}. The transmission in the blank space in Fig. \ref{fig-segement-demosration-small-pktduration} will not cause neither intra-slot collision nor inter-slot collision with the tagged node $i$. In this sense, two concurrent transmissions are allowed to reuse the channel. However, when $t_{slot}\leq 2t_f$ satisfies\footnote{we can increase the packet length or shorten the slot length to achieve this condition.}, the blank space in Fig. \ref{fig-segement-demosration-small-pktduration} and spatial reuse will disappear. In this case, IRs of adjacent slots will coincide with each other, and we call the coincided IR as a \emph{deep interference region} (DIR).

It is worth noting that only the IRs of two adjacent slots would coincide with each other due to the fact that $t_f<t_{slot}$.

%

\begin{proposition} 
DIR only exists between IRs of two consecutive slots.
\label{proposition4}
\end{proposition}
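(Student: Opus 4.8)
The plan is to argue directly from the closed-form description of the interference region in Proposition \ref{proposition2}, reducing the statement to an elementary disjointness claim about intervals. For brevity write the unclipped interval
$\widetilde{\text{IR}}_i^{\Delta m}=[\,d_i+v(\Delta m\,t_{slot}-t_f),\ d_i+v(\Delta m\,t_{slot}+t_f)\,]$, so that by \eqref{equa-possible-interfering-region} we have $\text{IR}_i^{\Delta m}=\widetilde{\text{IR}}_i^{\Delta m}\cap[0,R]$. Each $\widetilde{\text{IR}}_i^{\Delta m}$ is an interval of length exactly $2vt_f$ centered at $d_i+v\Delta m\,t_{slot}$, and as $\Delta m$ increases the interval slides monotonically toward the outer boundary of the coverage disk. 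A DIR between slots $\Delta m$ and $\Delta m'$ is, by definition, precisely the set $\text{IR}_i^{\Delta m}\cap\text{IR}_i^{\Delta m'}$ when this set is nonempty, so it suffices to show this intersection is empty whenever $|\Delta m-\Delta m'|\geq 2$.

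First I would fix $\Delta m$ and $\Delta m'$ with $\Delta m'\geq \Delta m+2$ (the case $\Delta m'\leq \Delta m-2$ being symmetric by the same computation). The right endpoint of $\widetilde{\text{IR}}_i^{\Delta m}$ is $d_i+v(\Delta m\,t_{slot}+t_f)$ and the left endpoint of $\widetilde{\text{IR}}_i^{\Delta m'}$ is $d_i+v(\Delta m'\,t_{slot}-t_f)$; their difference equals $v\big((\Delta m'-\Delta m)t_{slot}-2t_f\big)\geq v(2t_{slot}-2t_f)>0$, where the last strict inequality uses $t_{slot}=t_f+\beta\tau>t_f$, i.e. a slot is always strictly longer than one packet duration. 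Hence $\widetilde{\text{IR}}_i^{\Delta m}$ and $\widetilde{\text{IR}}_i^{\Delta m'}$ are disjoint, separated by a genuine gap of width at least $2v(t_{slot}-t_f)$.

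Then I would observe that intersecting with $[0,R]$ only shrinks each interval, so $\text{IR}_i^{\Delta m}\cap\text{IR}_i^{\Delta m'}\subseteq\widetilde{\text{IR}}_i^{\Delta m}\cap\widetilde{\text{IR}}_i^{\Delta m'}=\emptyset$; clipping cannot create a new overlap. Therefore no DIR can occur between interference regions of non-consecutive slots, so any DIR is confined to a pair $\Delta m,\Delta m+1$. For completeness I would also record that the same endpoint computation with $\Delta m'=\Delta m+1$ gives difference $v(t_{slot}-2t_f)$, which is $\leq 0$ exactly when $t_{slot}\leq 2t_f$; this both recovers the regime of Proposition \ref{proposition3} (adjacent IRs overlap, i.e. a consecutive-slot DIR is actually nonempty) and confirms that the width $2vt_f$ of a single IR is strictly less than the slot spacing $vt_{slot}$, which is the structural reason the phenomenon is limited to neighboring slots.

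The argument is short, and I do not expect a genuine obstacle: the only care needed is bookkeeping — reading the endpoints off \eqref{equa-possible-interfering-region} correctly, treating the two signs of $\Delta m'-\Delta m$ by symmetry, and being explicit that the strict separation for $|\Delta m-\Delta m'|\geq 2$ rests on $t_{slot}>t_f$ (so the gap is a true region, not a single touching point). The proposition is essentially a one-line consequence of each interference region having width $2vt_f<2vt_{slot}$.
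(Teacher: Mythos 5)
Your proof is correct and follows essentially the same route as the paper: you compare the right endpoint of $\text{IR}_i^{\Delta m}$ with the left endpoint of the IR two or more slots later and use the relation between $t_f$ and $t_{slot}$ to conclude disjointness, exactly as the paper does for slots $\Delta m$ and $\Delta m+2$ (your treatment of general $|\Delta m'-\Delta m|\ge 2$, the symmetric direction, and the clipping to $[0,R]$ are minor extra bookkeeping). The only nuance is that the paper allows $\beta=0$, so your strict inequality $t_{slot}>t_f$ need not hold; in that boundary case the two intervals touch at a single point of zero area, so the conclusion is unaffected.
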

\begin{proof}
We consider three consecutive slots, which are slot $\Delta m$, slot $\Delta m+1$, and slot $\Delta m+2$.
According to \eqref{equa-possible-interfering-region}, $\text{IR}_i^{\Delta m}$ and $\text{IR}_i^{\Delta m+2}$ are $[d_{i}+v(\Delta m\cdot t_{slot}-t_{f}), d_{i}+v(\Delta m\cdot t_{slot}+t_{f})]$ and $[d_{i}+v( (\Delta m+2)\cdot t_{slot}-t_{f}), d_{i}+v( (
\Delta m+2)\cdot t_{slot}+t_{f})].$
Since $t_{f}\leq t_{slot}$ always holds, we have $d_{i}+v(\Delta m\cdot t_{slot}+t_{f})\leq d_{i}+v( (\Delta m+2)\cdot t_{slot}-t_{f})$. Thus, $\text{IR}_i^{\Delta m}$ and $\text{IR}_i^{\Delta m+2}$ are impossible overlapped. The DIR could be the overlapped IR of only two consecutive slots. 
\IEEEQEDhere
\end{proof}

\section{Performance Analysis for Slotted MAC}\label{sect:perf}
We have known that a long slot length (i.e., $\beta\geq1$) can eliminate inter-slot collisions (c.f., Proposition \ref{proposition1}) and accommodate channel-reuse opportunities (c.f., Proposition \ref{proposition3}). However, a long slot length will lead to low channel utilization and severe intra-slot collisions due to the suppressed backlogs in the long slot. In this section, we further study the collision probabilities and the transmission throughput with respect to the slot length, packet duration, packet arrival rate, and the number of nodes.


\subsection{Successful Transmission Probability and Throughput}\label{sect:psNT}
Based on the discussion in Section \ref{sect:collisions}, we assume that a tagged node $i$ ($i\in \mathbf{N}$) who locates at $d_{i}\in\mathbf{D_{k}}$ sends a packet in  slot $\Delta m=0$ (i.e., $t_{i}=0$). The collisions may come from the sending at $t_{j}=\Delta m\cdot t_{slot}$, which is in the slots $\Delta m \in \mathbf{C_{k}}$. 

We use the typical Slotted-ALOHA to study the performance of slotted MAC.  
 A node using Slotted-ALOHA sends packets at the beginning of a slot once a packet arrives at the MAC layer before this slot. To avoid collisions with node $i$, all the nodes in $\text{IR}_i^{\Delta m}$ ($\forall\Delta m \in \mathbf{C_{k}}$) should keep silent in the corresponding slots, i.e., no packets arrive at these nodes in the previous slot $\Delta m-1$. 

 Assume that the packet arrivals follow a Poisson process having an arrival rate of $\lambda$. The probability that no packets arrive in a slot can be calculated by $e^{-\lambda\cdot t_{slot}}$. 
The transmission probability for each slot is then 
\begin{equation}
 p_{t} = 1-e^{-\lambda\cdot t_{slot}}.
  \label{equa-transmission-probability}
\end{equation}


Based on Proposition \ref{proposition3}, we study the probability of successful transmissions in two cases, i.e., $t_{slot}>2t_f$ and $t_f\leq t_{slot}\leq 2t_f$. 

\begin{enumerate}
 \item Case I of $t_{slot}> 2t_f$: 
Let $P_{z,k}$ denote the probability that a node locates in the IRs. As shown in Fig. \ref{fig-segement-demosration-small-pktduration}, the IRs of the interference slots in $\mathbf{C_{k}}$ do not coincide with each other. Then, we have $P_{z,k}=\sum_{\Delta m \in \mathbf{C_{k}}}\!p_{z,k}^{\Delta m} $, where $p_{z,k}^{\Delta m}$ is the probability that a node is inside $\text{IR}_i^{\Delta m}$. Then $1-P_{z,k}$ is the probability that a node falls in CRFs in Case I. Nodes in IRs should keep silent to avoid colliding with nodes in CFRs. 
Thus the non-collision probability $\Pr \{\textbf{NC}|d_{i} \in \mathbf{D_{k}}\}$ conditioned on node $i$ $\mathbf{D_{k}}$ can be expressed as follow 
  \begin{equation}
  \begin{array}{l}
   \Pr \{\textbf{NC}|d_{i} \in \mathbf{D_{k}}\}  \\ 
= \sum\limits_{j = 0}^{N-1} {\left( {\begin{array}{*{20}{c}}
     {N - 1}  \\
     j  \\
  \end{array}} \right){{\left( {1 - {P_{z,k}}} \right)}^{N - 1 - j}}} {\left( {{P_{z,k}}} \right)^j}{\left( {1 - {p_t}} \right)^j} \\ 
= {(1 - p_{t}\! \cdot\! {P_{z,k}})^{N - 1}}. \\ 
   \end{array}
   \label{equa-segemnt-ps-smalltf}
  \end{equation}
 \item Case II of $t_f\leq t_{slot}\leq 2t_f$:  The collisions in Case II are much more complicated than in Case I. Fig. \ref{fig-segement-demosration-big-pktduration} shows that the IRs of different slots coincide with each other, and Proposition \ref{proposition4}  shows that only IRs of two consecutive slots have a coincided region. Therefore, the event $\textbf{NC}$ is equivalent to that the nodes inside the DIRs have no packet arrivals in $2\!\cdot\!t_{slot}$ duration and the other nodes outside the DIRs have no packet arrivals in $t_{slot}$ duration.
 Let $P_{o,k}$ denote the probability that a node inside the DIR conditioned on $d_{i}\in \mathbf{D_{k}}$. 
 Similarly, we have $P_{o,k}=\sum_{\Delta m \in \mathbf{C_{k}}}\!p_{o,k}^{\Delta m, \Delta m+1}$, where $p_{o,k}^{\Delta m, \Delta m+1} $ is the probability that a node is inside the DIR of two consecutive $\text{IR}_i^{\Delta m}$ and $\text{IR}_i^{\Delta m+1}$. The probability $\Pr \{\textbf{NC}|d_{i} \in \mathbf{D_{k}}\}$ in Case II can be expressed as follow
  \begin{equation}
  \begin{array}{l}
   \Pr \{\textbf{NC}|d_{i} \in \mathbf{D_{k}}\}   \\ 
 = \sum\limits_{j = 0}^{N-1} {\left( {\begin{array}{*{20}{c}}
     {N - 1}  \\
     j  \\
   \end{array}} \right){{\left( {1 - {P_{o,k}}} \right)}^{ j}}}{\left( {1 - {p_t}} \right)^j} \\
   \quad\quad\cdot{\left( {{P_{o,k}}} \right)^{N-1-j}}{\left({1 - {p_{c}}}\right)^{\left(N-1-j\right)} }\\ 
  = {\left((1 - p_{t})\cdot \left(1-{P_{o,k}}+(1-p_{t})\cdot{P_{o,k}} \right)\right)^{N - 1}},\\ 
   \end{array}
   \label{equa-segemnt-ps-bigtf_2}
\end{equation}
where $1- p_{c}$ is the probability that a node keeps silent for two slots and $p_{c}=1-e^{-\lambda\cdot 2t_{slot}}$. The last equation in \eqref{equa-segemnt-ps-bigtf_2} follows due to $(1-p_{c})=(1-p_{t})^2$.
\end{enumerate}

It is interesting to point out that, we can have the following relations from Fig. \ref{fig-interfering-segement}. 
\begin{itemize}
\item $P_{z,k}(t_{f}=a, \beta\tau=b)=1-P_{o,k}(t_{f}=b, \beta\tau=a)$ in Case I: It can be observed from Fig. \ref{fig-interfering-segement} that the widths of a CFR and a DIR are the same as $v|t_f-\beta\tau|$. As long as $d_i$ keeps unchanged, the areas of CFRs and DIRs will not change. Then, we have $P_{z,k}(t_{f}=a, \beta\tau=b)=1-(1-P_{z,k}(t_{f}=a, \beta\tau=b))=1-P_{o,k}(t_{f}=b, \beta\tau=a)$.
\item $P_{z,k}=1+P_{o,k}$ in Case II: Two consecutive IRs overlaps in Case II. The overlapped areas are then DIRs. Therefore, we have $P_{z,k}=\sum_{\Delta m \in \mathbf{C_{k}}}\!p_{z,k}^{\Delta m} =1+P_{o,k}$.
\end{itemize}


The probability $\Pr \{\textbf{NC}|d_{i} \in \mathbf{D_{k}}\}$ gives only the successful transmission probability of a tagged node $i$ who locates in $\mathbf{D_{k}}$. We further consider the distribution of node $i$ to obtain the successful transmission probability $P_{s}$. Let $P_{k}$ denote the probability of $d_{i}\in \mathbf{D}_{k}$, and let $P_{s,k}=\Pr \{\textbf{NC}|d_{i} \in \mathbf{D_{k}}\}$.  The successful transmission probability $P_{s}$ can be expressed by 
\begin{equation}
 P_{s} = \sum_{\mathbf{D_{k}}\in \mathbf{D}} P_{k}\cdot P_{s,k}. 
  \label{equa-total-successful-rate}
\end{equation}

Based on the successful transmission probability and packet arrival rate, we have the throughput $T$ of the slotted MAC in UANs as follows: 
\begin{equation}
T
=\frac{N\!\cdot\! \lambda\! \cdot\! t_{f}\!\cdot\!t_{slot}}{t_{slot}}\cdot\! P_{s}
=N\!\cdot\! \lambda\! \cdot \!t_{f}\!\cdot\! P_{s}.
\label{equa_throughput}
\end{equation}

The throughput in \eqref{equa_throughput} is an aggregate network throughput for all the nodes and is normalized by the packet duration. 
\begin{proposition} 
The successful transmission probabilities and throughput are the same at $\beta=0$ and at $\beta\tau=t_f$ under Poison packet arrivals. 
\label{proposition5}
\end{proposition}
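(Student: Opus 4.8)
The plan is to reduce both claims to a single computation of the per-segment non-collision probability $P_{s,k}=\Pr\{\textbf{NC}\mid d_i\in\mathbf{D_k}\}$. Since $\beta=0$ means $t_{slot}=t_f$ and $\beta\tau=t_f$ means $t_{slot}=2t_f$, both settings satisfy $t_f\le t_{slot}\le 2t_f$, so the Case~II expression \eqref{equa-segemnt-ps-bigtf_2} applies to each. Because $P_s=\sum_{\mathbf{D_k}\in\mathbf{D}}P_k P_{s,k}$ with $\sum_k P_k=1$ (see \eqref{equa-total-successful-rate}) and $T=N\lambda t_f P_s$ (see \eqref{equa_throughput}), it suffices to show that in each of the two settings $P_{s,k}$ is one and the same constant for every segment index $k$; the partition $\mathbf{D}$ and the number of segments differ between the two settings, but this is immaterial once $P_{s,k}$ is segment-independent. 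I will show $P_{s,k}=e^{-2\lambda(N-1)t_f}$ in both cases, using $1-p_t=e^{-\lambda t_{slot}}$ from \eqref{equa-transmission-probability} and $(1-p_c)=(1-p_t)^2$.

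For $\beta\tau=t_f$, i.e.\ $t_{slot}=2t_f$: each $\text{IR}_i^{\Delta m}$ has width $2vt_f$ while consecutive IR centers are $vt_{slot}=2vt_f$ apart, so adjacent IRs meet at a point and never overlap; the CFRs of Proposition~\ref{proposition3} and the DIRs all degenerate to measure zero, whence $P_{o,k}=0$ for every $k$. Substituting into \eqref{equa-segemnt-ps-bigtf_2} gives $P_{s,k}=(1-p_t)^{N-1}=e^{-\lambda(N-1)t_{slot}}=e^{-2\lambda(N-1)t_f}$.

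For $\beta=0$, i.e.\ $t_{slot}=t_f$: the IRs still have width $2vt_f$ but adjacent centers are only $vt_f$ apart, so each adjacent pair overlaps in a DIR of width $vt_f$; by Proposition~\ref{proposition4} only consecutive IRs overlap, and these DIRs tile the union $\bigcup_{\Delta m}\text{IR}_i^{\Delta m}$. The geometric fact to pin down is $P_{o,k}=1$ for every $k$: any node in $[0,R]$ satisfying \eqref{equa-possible-interfering-region} for some slot $\Delta m$ lies in $\text{IR}_i^{\Delta m}$, hence in $\text{DIR}(\Delta m-1,\Delta m)$ or $\text{DIR}(\Delta m,\Delta m+1)$ (both defining slots remaining in $\mathbf{C_k}$ since the node is in $[0,R]$), so the ``inside exactly one IR'' region has zero measure. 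Substituting $P_{o,k}=1$ into \eqref{equa-segemnt-ps-bigtf_2} gives $P_{s,k}=\big((1-p_t)(1-p_t)\big)^{N-1}=e^{-2\lambda(N-1)t_{slot}}=e^{-2\lambda(N-1)t_f}$, the same constant. Hence $P_s$ and $T$ coincide in the two settings. The intuition is that both slot lengths force every potential interferer to be silent for a total of $2t_f$: as one slot when $t_{slot}=2t_f$, and as two slots of length $t_f$ (the $(1-p_c)=(1-p_t)^2$ requirement for a DIR node) when $t_{slot}=t_f$.

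The step I expect to be the main obstacle is making $P_{o,k}=1$ at $\beta=0$ exact rather than approximate: one must verify that after clipping to $[0,R]$, and in particular at the extreme slots $\Delta m=\pm M$ where a ``$\text{DIR}(M,M+1)$'' would reference a slot outside $\mathbf{M}$, no sliver of positive length survives inside exactly one IR. I would settle this directly from the defining critical points $X^{\Delta m}$ of the segments, which are chosen so that the clipped IRs over $\mathbf{C_k}$ cover exactly the set of $[0,R]$-positions reachable by potential interferers; combined with the tiling-by-DIRs structure this yields $P_{o,k}=1$ exactly.
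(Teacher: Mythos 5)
Your proposal is correct and follows essentially the same route as the paper: show that in both settings every potential interferer must stay silent for a total duration $2t_f$, so $P_{s,k}=e^{-2(N-1)\lambda t_f}$ independently of the segment $k$, and then conclude via \eqref{equa-total-successful-rate} and \eqref{equa_throughput}. The only cosmetic difference is that at $t_{slot}=2t_f$ the paper argues via $P_{z,k}=1$ with the Case~I formula while you use $P_{o,k}=0$ in the Case~II formula (equivalent at that boundary), and your justification that $P_{o,k}=1$ at $\beta=0$ is somewhat more careful than the paper's appeal to Fig.~\ref{fig-interfering-segement}.
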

\begin{proof}
When $\beta=0$, there is no guard interval and all the DIRs cover the whole communication range of the receiver, As shown in Fig. \ref{fig-interfering-segement}. Then, we have  have $P_{o,k}=1$. Based on \eqref{equa-transmission-probability} and \eqref{equa-segemnt-ps-bigtf_2}, we have $p_t=1- e^{-\lambda\cdot t_{f}}$ and $\Pr \{\textbf{NC}|d_{i} \in \mathbf{D_{k}}\} = (1-p_{t})^{2(N-1)}$, respectively. Therefore, the non-collision probability for a tagged node becomes $P_{s,k}=\Pr \{\textbf{NC}|d_{i} \in \mathbf{D_{k}}\} = e^{-2(N-1)\lambda\cdot t_{f}}$. 

When $\beta\tau=t_{f}$, there are neither CFR nor DIR and all the IRs cover the whole communication range of the receiver. We have $P_{z,k}=1$. We also have  $p_t=1- e^{-2\lambda\cdot t_{f}}$ and $\Pr \{\textbf{NC}|d_{i} \in \mathbf{D_{k}}\} = (1-p_{t})^{N-1}$, respectively. We then obtain $P_{s,k}=\Pr \{\textbf{NC}|d_{i} \in \mathbf{D_{k}}\} = e^{-2(N-1)\lambda\cdot t_{f}}$. 

It is observed that $P_{s,k}$ is independent with the location of node $i$ when $\beta=0$ and $\beta\tau=t_f$. According to \eqref{equa-total-successful-rate} and \eqref{equa_throughput}, the successful transmission probabilities and throughput at $\beta=0$ and $\beta\tau=t_f$ are the same. 
\IEEEQEDhere
\end{proof}

We will further show the existence of the maximizer for the slot length to reach a peak performance in uniformly distributed UANs.


\subsection{Case Study for Uniformly Distributed UANs}\label{sect:case}

The node distribution in UANs determines the probabilities $P_{k}$, $P_{z,k}$, and $P_{o,k}$ in the discussion of Section \ref{sect:psNT}. In this section, we try to calculate these probabilities in a UAN where nodes are uniformly distributed, and then derive the successful transmission probability and throughput using \eqref{equa-total-successful-rate} and \eqref{equa_throughput}.

In a uniformly distributed UAN, the probability that any node locates inside a region is proportional to the area of that region. Since the collision happens only at the sink, we focus on the coverage area of the sink. Let $A$ denote the area of the sink's coverage and $S_{k}$ denote the area of $\mathbf{D_{k}}$'s coverage. Thus, a sender falls in $\mathbf{D_{k}}$ with the following probability
\begin{equation}
 P_{k}=\frac{S_{k}}{A}.
\label{equa_PositionProbability_Pk}
\end{equation}

We next consider the different underwater acoustic propagation features in vertical and horizontal directions, which will lead to \emph{transmission range inconsistency}. The vertical transmission faces much less multi-path interferences caused by the sound reflection and refraction than the horizontal transmission \cite{Ref-Zhong}. For the same range of transmissions, the vertical transmission obtains a better signal-to-interference and noise ratio (SINR) than the horizontal transmission. Thus, the vertical transmission can reach a longer range than the horizontal transmission, resulting in a disk-shaped coverage for horizontal transmissions and an ellipse-shaped coverage for vertical transmissions, as shown in Fig. \ref{fig_voceraget_region}.


\begin{figure}
	\centering
	\subfloat[]{
	\includegraphics[width=0.45\columnwidth]{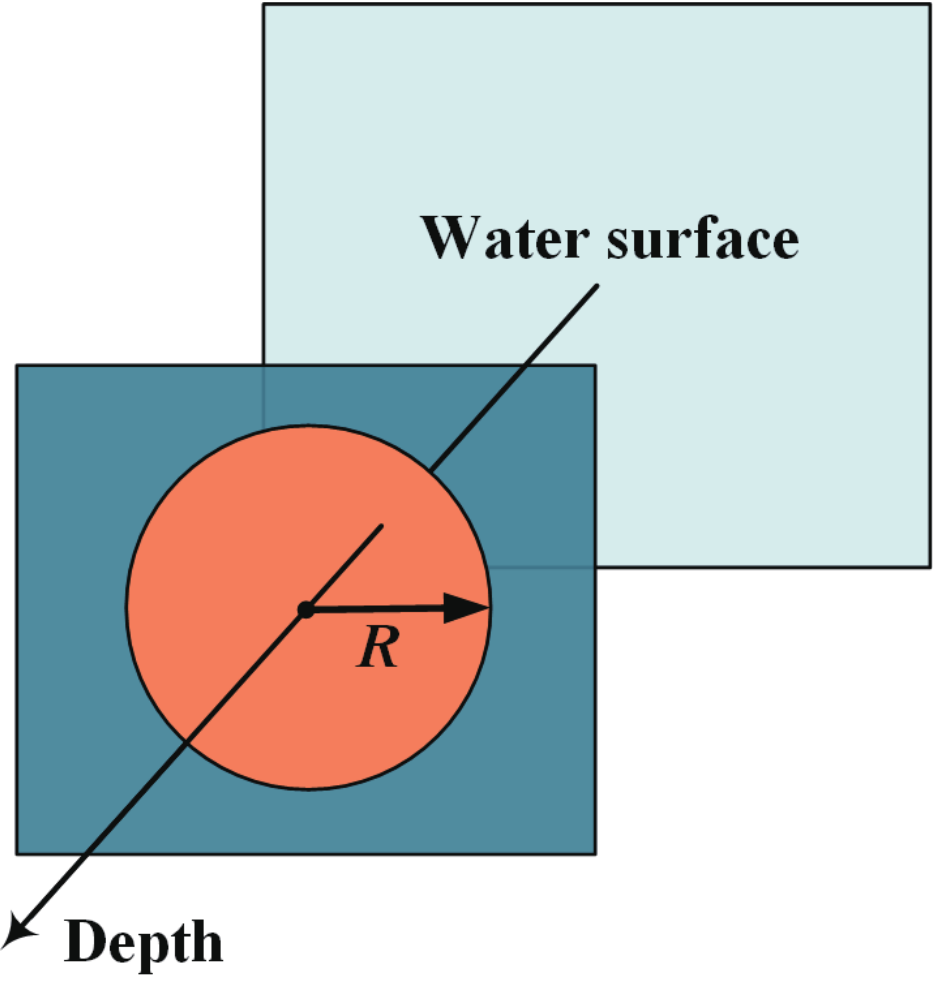}
	\label{fig-horizontal-coverage}}
	\hfill
	\subfloat[]{
	\includegraphics[width=0.45\columnwidth]{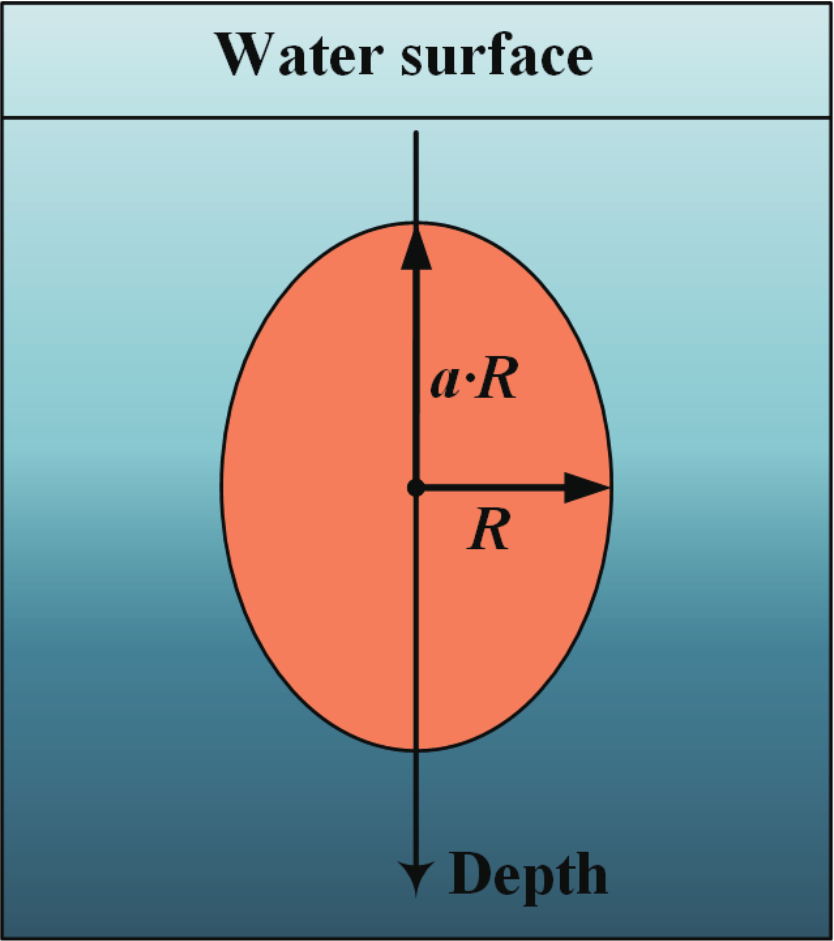}
	\label{fig-vertical-coverage}}
	\caption{Transmission range inconsistency in UANs. The range is (a) a circle for horizontal transmissions, and (b) an ellipse for vertical transmissions. }
	\label{fig_voceraget_region}
\end{figure}

To consider the transmission range inconsistency, 
let the radius of the disk-shaped horizontal coverage be $R$, and suppose the ellipse-shaped vertical coverage has a minor axis of $R$ and a major axis of $\alpha R$, where $\alpha>1$. Then, we have $A=\pi R^2$ in the horizontal transmission plane and $A=\pi\alpha R^2$ in the vertical transmission plane, respectively. The maximum propagation delay for vertical transmissions becomes $\alpha\tau$. Different from the IRs in the horizontal transmission plane in Fig. \ref{fig_horizontal_conflicit_region}, there are totally 8 different IRs  in the vertical transmission plane, as shown in Fig. \ref{fig_vertical_conflicit_region}.  

%


\begin{figure}
	\centering
	\subfloat[]{
	\includegraphics[width=0.22\columnwidth]{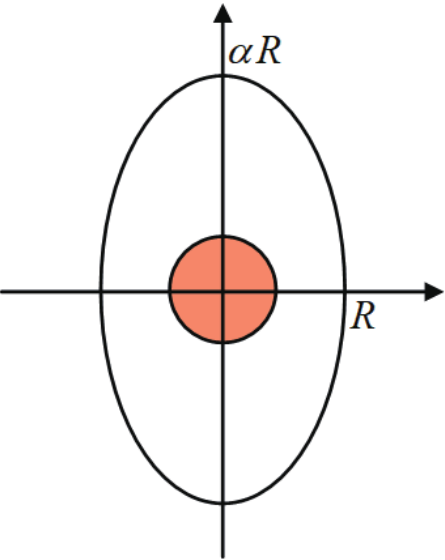}
	\label{fig-vertical-a}}
	\hfill
	\subfloat[]{
	\includegraphics[width=0.22\columnwidth]{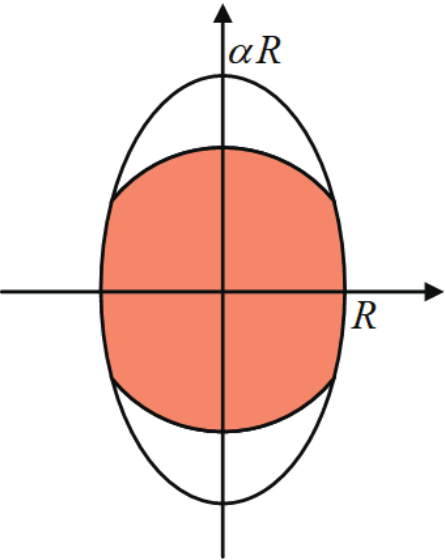}
	\label{fig-vertical-b}}
	\hfill
	\subfloat[]{
	\includegraphics[width=0.22\columnwidth]{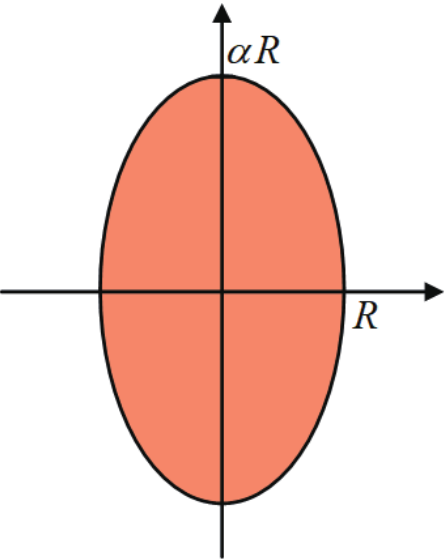}
	\label{fig-vertical-c}}
	\hfill
	\subfloat[]{
	\includegraphics[width=0.22\columnwidth]{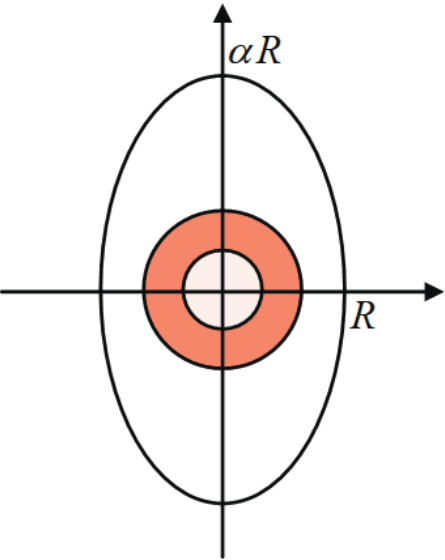}
	\label{fig-vertical-d}}\\

	\subfloat[]{
	\includegraphics[width=0.22\columnwidth]{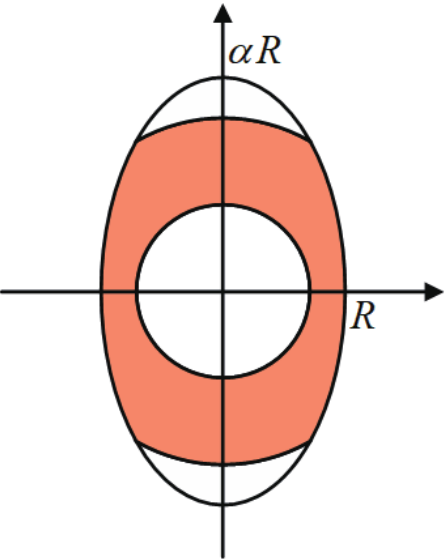}
	\label{fig-vertical-e}}
	\hfill
	\subfloat[]{
	\includegraphics[width=0.22\columnwidth]{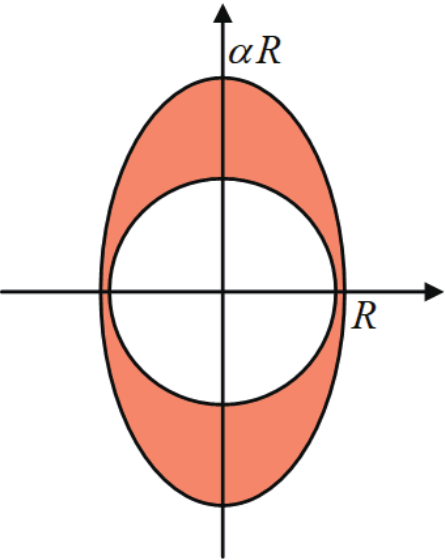}
	\label{fig-vertical-f}}
	\hfill
	\subfloat[]{
	\includegraphics[width=0.22\columnwidth]{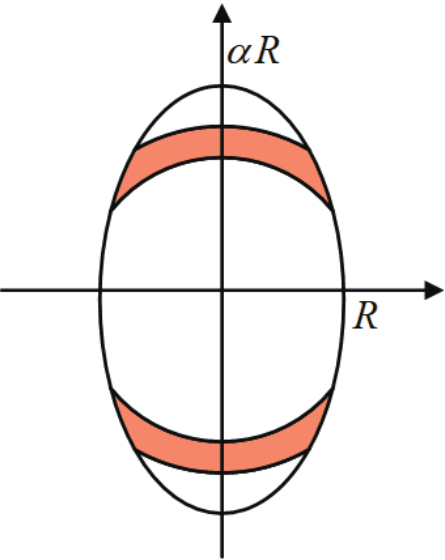}
	\label{fig-vertical-g}}
	\hfill
	\subfloat[]{
	\includegraphics[width=0.22\columnwidth]{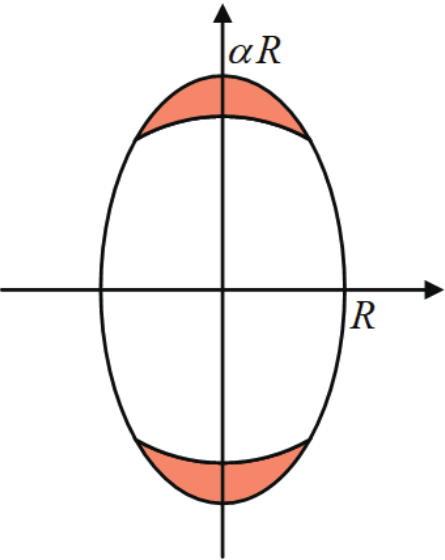}
	\label{fig-vertical-h}}\\	

	\caption{Eight possible IRs in the vertical transmission plane.}
	\label{fig_vertical_conflicit_region}
\end{figure}

For any sending node, let $S^{\Delta m}_{1,k}$ denote the area of its $\text{IR}_i^{\Delta m}$ and let $S^{\Delta m}_{2,k}$ denote the area of DIR for slots $\Delta m$ and $\Delta m+1$. The non-trivial expressions for $S^{\Delta m}_{1,k}$ and $S^{\Delta m}_{2,k}$ in the eight IRs in Fig. \ref{fig_vertical_conflicit_region} are derived in Appendix \ref{appendix:S}.

The probabilities $P_{z,k}$ and $P_{o,k}$ can be calculated as follows    
\begin{gather}
 P_{z,k} = \sum_{\Delta m\in\mathbf{C_{k}}}\frac{\mathbb{E}[S^{\Delta m}_{1,k}]}{\int_{d^{k}}f(l)Adl},
 \label{equa_PositionProbability_pzk}\\
 P_{o,k} = \sum_{\Delta m\in\mathbf{C_{k}}}\frac{\mathbb{E}[S^{\Delta m}_{2,k}]}{\int_{d^{k}}f(l)Adl},
\label{equa_PositionProbability_pok}
\end{gather}
where
\begin{equation}
 \mathbb{E}[S^{\Delta m}_{1,k}] = \int_{\mathbf{D_{k}}}f(l)S^{\Delta m}_{1,k}dl,\quad \mathbb{E}[S^{\Delta m}_{2,k}] = \int_{\mathbf{D_{k}}}f(l)S^{\Delta m}_{2,k}dl,
\end{equation}
where $f(l)$ is the link distance distribution function in UANs, and $\mathbb{E}[\cdot]$ is the expectation.
In the horizontal transmission plane, we have the link distance's probability density function (PDF) in a uniformly distributed disk as follows \cite{Ref-RandomNetwork} 
\begin{equation}
f(l) \!=\! \frac{{2l}}{{{R^2}}}\left( {\frac{2}{\pi }{{\cos }^{ - 1}}\left( {\frac{l}{{2R}}} \right) - \frac{l}{{\pi R}}\sqrt {1 - \frac{{{l^2}}}{{4{R^2}}}} } \right),0\! < l \!< 2R.
\label{equa_linkPDF_disk}
\end{equation}
In the vertical transmission plane, 
the PDF of link distance $l$ could be expressed as \cite{miller-distribution}
\begin{equation}
f(l) = \frac{9l}{2\alpha}\cdot \exp(-\frac{9l^{2}(\alpha^2+1)}{8\alpha^2})\cdot \mathrm{I_{0}}(\frac{9l^{2}(\alpha^2-1)}{8\alpha^2}), 
\label{equa_rectangle_pdf}
\end{equation}  
where the $\mathrm{I_{0}}$ is the modified Bessel function of the first kind. 

Substituting \eqref{equa_PositionProbability_Pk}, \eqref{equa_PositionProbability_pzk}, and \eqref{equa_PositionProbability_pok} to \eqref{equa-total-successful-rate} and \eqref{equa_throughput}, we can obtain the successful transmission probability and throughput for the uniformly distributed UANs. 

\begin{proposition} 
There exists a maximizer of the slot length between $(t_f, 2t_f)$ to achieve a peak successful transmission probability and throughput in uniformly distributed UANs.
\label{proposition7}
\end{proposition}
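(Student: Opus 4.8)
The plan is to reduce the claim to an elementary comparison on $[t_f,2t_f]$ and then invoke Proposition~\ref{proposition5}. By \eqref{equa_throughput}, $T=N\lambda t_f P_s$ is a fixed positive multiple of $P_s$, so it suffices to show that $P_s$, as a function of $t_{slot}$, attains its maximum over the compact interval $[t_f,2t_f]$ at an interior point (throughout this interval we are in Case~II, and we assume $N\ge2$; for $N=1$ the network is collision-free, $P_s\equiv1$, and the statement is trivial). I would first argue that $P_s(t_{slot})=\sum_k P_k\,P_{s,k}$, with $P_{s,k}=e^{-(N-1)\lambda t_{slot}}\bigl(1-P_{o,k}(1-e^{-\lambda t_{slot}})\bigr)^{N-1}$ from \eqref{equa-segemnt-ps-bigtf_2} and \eqref{equa-total-successful-rate}, is continuous in $t_{slot}$: the segment areas $S_k$ and the DIR areas vary continuously, and at a value of $t_{slot}$ where a segment $\mathbf{D_k}$ splits, the resulting pieces share the same interference slots, so their $P_{s,k}$ agree in the limit while their $P_k$ sum to the original. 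Hence by the extreme value theorem $P_s$ has a maximizer $t^\star$ on $[t_f,2t_f]$. Since Proposition~\ref{proposition5} gives $P_s(t_f)=P_s(2t_f)=e^{-2(N-1)\lambda t_f}$, it then remains only to prove $P_s(t_{slot})>e^{-2(N-1)\lambda t_f}$ for every $t_{slot}\in(t_f,2t_f)$, which forces $t^\star\in(t_f,2t_f)$.

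The heart of the argument is a per-segment inequality. Fixing $t_{slot}\in(t_f,2t_f)$ and any $k$, and taking $(N-1)$-th roots (all quantities being positive), one checks that
\[
P_{s,k}(t_{slot})>e^{-2(N-1)\lambda t_f}
\quad\Longleftrightarrow\quad
P_{o,k}(t_{slot})<\psi(t_{slot}):=\frac{1-e^{-\lambda(2t_f-t_{slot})}}{1-e^{-\lambda t_{slot}}}.
\]
So everything hinges on an upper bound for $P_{o,k}$. The geometry from Propositions~\ref{proposition2} and~\ref{proposition4} supplies one: for a tagged node at distance $l$, the deep-interference regions are \emph{disjoint} intervals, each of width $v(2t_f-t_{slot})$, with centres spaced $v\,t_{slot}$ apart, so they occupy a fraction $\tfrac{2t_f-t_{slot}}{t_{slot}}$ of the covered distance range; because, under uniform deployment, the interferer's distance-to-sink density (from \eqref{equa_linkPDF_disk} horizontally, \eqref{equa_rectangle_pdf} vertically) is essentially flat on the scale $v\,t_{slot}$ and is only thinned by clipping to $[0,R]$, this yields the uniform bound $P_{o,k}(t_{slot})\le\tfrac{2t_f-t_{slot}}{t_{slot}}$. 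Finally, $x\mapsto(1-e^{-\lambda x})/x$ is strictly decreasing on $(0,\infty)$ --- the numerator of its derivative, $e^{-\lambda x}(1+\lambda x)-1$, is negative since $1+\lambda x<e^{\lambda x}$ --- so applying this with $a=2t_f-t_{slot}<t_{slot}=b$ gives $\tfrac{1-e^{-\lambda a}}{a}>\tfrac{1-e^{-\lambda b}}{b}$, i.e.\ $\tfrac{2t_f-t_{slot}}{t_{slot}}<\psi(t_{slot})$. Chaining the two, $P_{o,k}(t_{slot})\le\tfrac{2t_f-t_{slot}}{t_{slot}}<\psi(t_{slot})$, hence $P_{s,k}(t_{slot})>e^{-2(N-1)\lambda t_f}$ for every $k$.

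Summing over segments, since $\sum_k P_k=1$ we get $P_s(t_{slot})=\sum_k P_k P_{s,k}(t_{slot})>e^{-2(N-1)\lambda t_f}=P_s(t_f)=P_s(2t_f)$ for all $t_{slot}\in(t_f,2t_f)$, so $t^\star$ is neither endpoint; hence $t^\star\in(t_f,2t_f)$, and the same $t^\star$ maximizes $T=N\lambda t_f P_s$. The vertical-transmission case runs identically after substituting $\alpha\tau$ for $\tau$ and using \eqref{equa_rectangle_pdf}.

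I expect the main obstacle to be the uniform estimate $P_{o,k}(t_{slot})\le\tfrac{2t_f-t_{slot}}{t_{slot}}$. The length-fraction heuristic is exact only for a perfectly uniform interferer-distance density, so making it rigorous requires the explicit DIR areas $S_{2,k}^{\Delta m}$ of Appendix~\ref{appendix:S} together with the disk and ellipse link-distance densities, plus careful bookkeeping of (i) the partially covered DIRs near the boundary of $[0,R]$, which can only decrease $P_{o,k}$, and (ii) the mild nonuniformity of the density over a window of width $v\,t_{slot}$. There is no room to be wasteful: the chained inequality $P_{o,k}\le\tfrac{2t_f-t_{slot}}{t_{slot}}<\psi(t_{slot})$ is an equality at $t_{slot}=t_f$ and $t_{slot}=2t_f$ and becomes tight as $\lambda\to0$, so a cruder bound (via $\|f\|_\infty$, say) loses a constant factor and fails at light load. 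The remaining ingredients --- continuity of $P_s$, the algebraic equivalence, and the monotonicity of $(1-e^{-\lambda x})/x$ --- are routine.
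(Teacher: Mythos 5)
Your skeleton is sound and close in spirit to the paper's: endpoints equal by Proposition \ref{proposition5}, continuity plus the extreme value theorem, then a strict interior improvement. Your per-segment equivalence $P_{s,k}>e^{-2(N-1)\lambda t_f}\Leftrightarrow P_{o,k}<\psi(t_{slot})$ and the monotonicity of $(1-e^{-\lambda x})/x$ (so that $(2t_f-t_{slot})/t_{slot}<\psi(t_{slot})$) are correct algebra, and in fact a cleaner substitute for the paper's Jensen step. But the load-bearing step --- the uniform geometric bound $P_{o,k}\le (2t_f-t_{slot})/t_{slot}$ for \emph{every} segment $\mathbf{D_{k}}$ --- is exactly what you do not prove, and as stated it is not true in general. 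Under uniform area deployment the interferer's distance distribution is not flat: the relevant densities (\eqref{equa_linkPDF_disk} horizontally, \eqref{equa_rectangle_pdf} vertically, and likewise the annulus areas $S^{\Delta m}_{2,k}$ of Appendix \ref{appendix:S}) grow with radius over most of the range, so within each period cell of width $v\,t_{slot}$ the DIR of width $v(2t_f-t_{slot})$ can sit at the heavy end of the cell; for tagged-node positions $d_i$ whose DIR lattice aligns that way, the probability mass of the DIRs exceeds their length fraction and the claimed bound reverses. Boundary clipping does not rescue this, since $P_{o,k}$ is normalized and the over-weighting occurs in every interior cell. What can plausibly hold is only the $P_k$-weighted average $\sum_k P_k P_{o,k}\le (2t_f-t_{slot})/t_{slot}=(t_f-a)/(t_f+a)$, i.e., exactly inequality \eqref{equa_jensen_gap1}.

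That averaged inequality is precisely where the paper's own proof lands: it lower-bounds $P_s$ via Jensen's inequality, reduces the claim to \eqref{equa_jensen_gap1}, and then does not prove it analytically either --- it verifies $\Psi>0$ numerically (Fig. \ref{fig_Proof_vertical_gap}). So your proposal reproduces the reduction by a different (and arguably sharper, per-segment) route, but it replaces the single non-trivial step with a flat-density heuristic that is false segment-wise and unestablished on average --- a gap you yourself flag as ``the main obstacle.'' To complete the argument you would have to bound $\sum_k P_k P_{o,k}$ using the explicit DIR areas of Appendix \ref{appendix:S} integrated against the link-distance densities, which is the same computation the paper sidesteps numerically; absent that, the proposal does not yet constitute a proof.
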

\begin{proof}
According to the Proposition \ref{proposition5}, we only have to prove there is at least one slot length in $(t_f, 2t_f)$ to achieve better successful transmission probability and throughput than the slot length $t_f$ or $2t_f$. 
Suppose $\beta\tau=a$ ($a\in (0, t_f$)), we need to prove
\begin{equation}
 P_{s}(\beta\tau=a)\geq P_{s}(\beta\tau=t_f). \label{extremevalue-condition}
\end{equation} 

Based on \eqref{equa-transmission-probability}, \eqref{equa-segemnt-ps-bigtf_2} and \eqref{equa-total-successful-rate}, we have
\begin{equation}
\begin{aligned}
 \!P_{s} 
     \!  &= \sum_{\mathbf{D_{k}}\in \mathbf{D}}  \! P_{k} \!\left((e^{-\lambda t_{slot}} \!) (1 \!-\! {P_{o,k}}) \!+\!(e^{-2\lambda t_{slot}}){P_{o,k}} \right)^{\!N \!-\! 1}.
\end{aligned}
\end{equation}
Since the exponential function is a convex function, using Jensen's inequality, we have
\begin{equation}
\begin{aligned} 
 P_{s} &\geq\sum_{\mathbf{D_{k}}\in \mathbf{D}}P_{k}e^{-\lambda(N - 1)t_{slot}(1+P_{o,k})}\\
       &\geq e^{-\lambda(N - 1)t_{slot}\sum_{\mathbf{D_{k}}}P_{k}(1+P_{o,k})},
\end{aligned}
\end{equation}   
We have derived $P_{s}(\beta\tau=t_f)=e^{-2(N-1)\lambda\cdot t_{f}}$ in Proposition \ref{proposition5}. Then, \eqref{extremevalue-condition} is equivalent to 
\begin{equation}
 \sum_{\mathbf{D_{k}}}P_{k}(P_{o,k}(\beta\tau=a))\leq \frac{t_f-a}{t_f+a}.
 \label{equa_jensen_gap1}
\end{equation}
Define $\Psi$ as follow,s 
\begin{equation}
 \Psi=\frac{t_f-a}{t_f+a}-\sum_{\mathbf{D_{k}}}P_{k}(P_{o,k}(\beta\tau=a)).
\end{equation}

The probabilities $P_k$ and $P_{o,k}$ depend on the node distribution. Due to the non-trivial  expressions of $P_k$ and $P_{o,k}$, we use the numerical results of $\Psi$ with respect to $t_f$ and $\beta\tau$ in Fig. \ref{fig_Proof_vertical_gap} to show that $\Psi>0$ is satisfied for $0<\beta\tau<t_f$. Thus, \eqref{extremevalue-condition} satisfies and Proposition \ref{proposition7} is proved. 
\begin{figure}
\centering
\includegraphics[width=0.8\columnwidth]{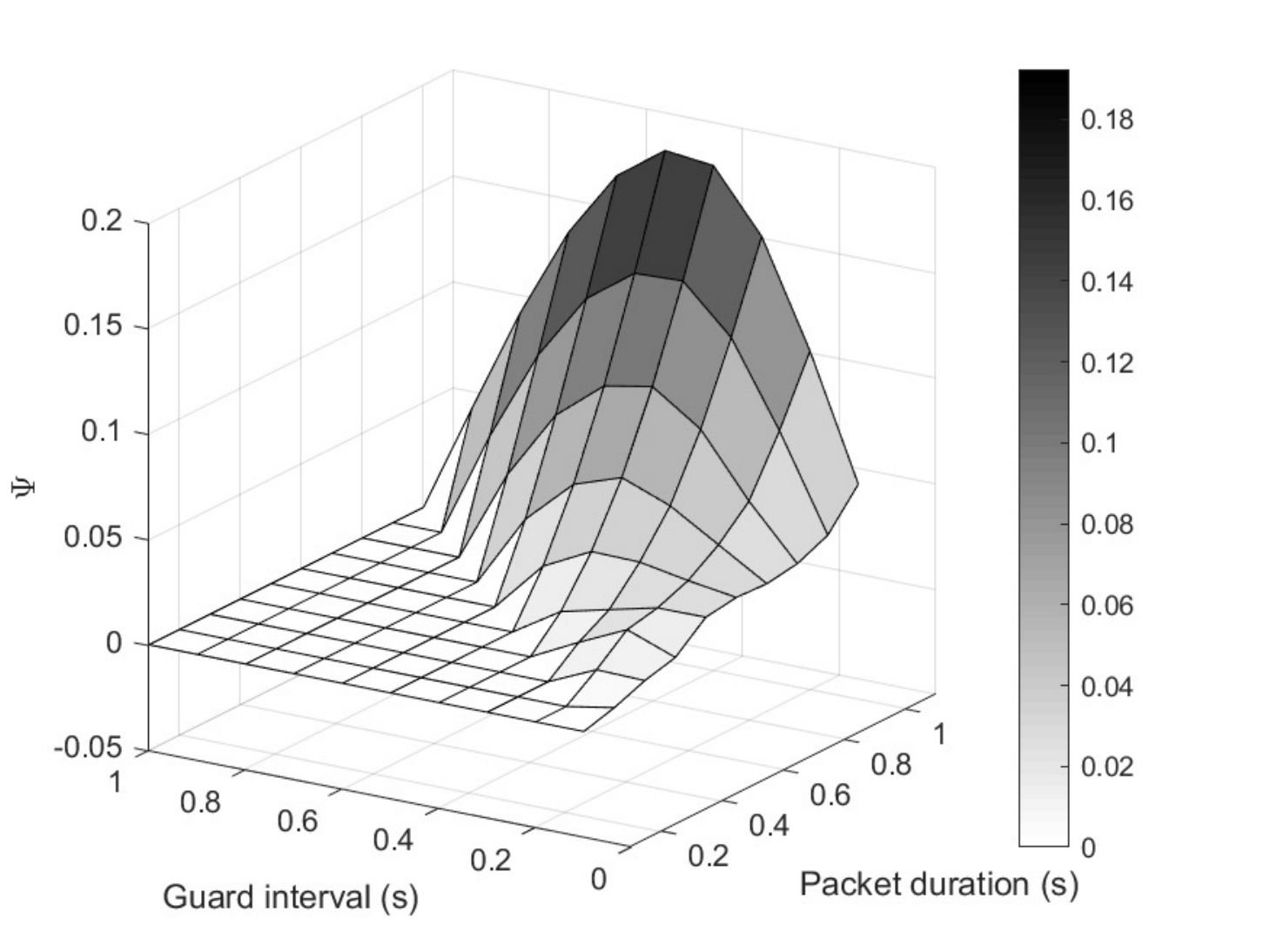}
\caption{$\Psi>0$ is always satisfied for $\beta\tau<t_f$.}
\label{fig_Proof_vertical_gap}
\end{figure}
\IEEEQEDhere
\end{proof}

\begin{proposition} \label{proposition6}
The vertical transmissions achieve the same successful transmission probability and throughput as the horizontal transmissions when $t_{slot}=2t_f$.
\end{proposition}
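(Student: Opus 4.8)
The plan is to show that the choice $t_{slot}=2t_f$ forces the per-segment non-collision probability $P_{s,k}$ to collapse to a single constant that contains neither the coverage radius $R$ nor the vertical elongation factor $\alpha$, so that averaging it over the node distribution gives the same $P_s$ in both geometries. First I would note that $t_{slot}=2t_f$ is exactly the boundary $\beta\tau=t_f$ of the slot model \eqref{equa-general-slot-length}, i.e.\ the second special case already examined in the proof of Proposition \ref{proposition5}. At this slot length the interval $\text{IR}_i^{\Delta m}$ in \eqref{equa-possible-interfering-region} has width $2vt_f$, while the centres of $\text{IR}_i^{\Delta m}$ and $\text{IR}_i^{\Delta m+1}$ are separated by $v\,t_{slot}=2vt_f$; hence consecutive IRs abut at a single point, leaving neither a collision-free gap nor a deep-interference overlap. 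It follows that the IRs indexed by $\mathbf{C_{k}}$ tile the whole communication range (the segment $[0,R]$ horizontally, $[0,\alpha R]$ vertically) up to a set of measure zero, so $P_{z,k}=1$ and $P_{o,k}=0$ for every segment $\mathbf{D_{k}}$, in both the disk model and the ellipse model.

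Next I would substitute $P_{o,k}=0$ and $p_t=1-e^{-\lambda t_{slot}}=1-e^{-2\lambda t_f}$ from \eqref{equa-transmission-probability} into \eqref{equa-segemnt-ps-bigtf_2} (equivalently, take the $P_{z,k}=1$ limit of \eqref{equa-segemnt-ps-smalltf}), which yields $P_{s,k}=\Pr\{\textbf{NC}|d_i\in\mathbf{D_{k}}\}=(1-p_t)^{N-1}=e^{-2(N-1)\lambda t_f}$. The decisive observation is that this expression does not depend on $k$, on $R$, or on $\alpha$: it is the same number for horizontal transmissions and for vertical transmissions. Then \eqref{equa-total-successful-rate} gives $P_s=\sum_{\mathbf{D_k}\in\mathbf{D}}P_k\,P_{s,k}=e^{-2(N-1)\lambda t_f}\sum_{\mathbf{D_k}\in\mathbf{D}}P_k=e^{-2(N-1)\lambda t_f}$, since the segment probabilities $P_k$ sum to one regardless of whether they are computed from the disk link-distance PDF \eqref{equa_linkPDF_disk} or from the ellipse PDF \eqref{equa_rectangle_pdf}. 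Finally, \eqref{equa_throughput} gives $T=N\lambda t_f P_s$, so the throughput is identical in the two planes as well, which is the claim.

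I expect the only delicate point to be the tiling step: $t_{slot}=2t_f$ is the degenerate boundary sitting between Case~I and Case~II, and one must check that there the union $\bigcup_{\Delta m\in\mathbf{C_{k}}}\text{IR}_i^{\Delta m}$ is exactly the full range and that all pairwise overlaps have zero length, so that $P_{z,k}=1$ and $P_{o,k}=0$ hold simultaneously and uniformly over $k$ and over both coverage shapes. Once this is settled, the remaining substitution is exactly the one performed in Proposition \ref{proposition5}, and the fact that the resulting constant $e^{-2(N-1)\lambda t_f}$ is independent of the spatial distribution is precisely what makes the horizontal and vertical performances coincide.
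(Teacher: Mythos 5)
Your argument is correct and is essentially the paper's own proof: both reduce $t_{slot}=2t_f$ to the $\beta\tau=t_f$ boundary case of Proposition \ref{proposition5}, where the IRs exactly tile the coverage range (no CFR, no DIR), so $P_{z,k}=1$, $P_{o,k}=0$, and $P_{s,k}=e^{-2(N-1)\lambda t_f}$ is independent of $k$, $R$, and $\alpha$, making $P_s$ and $T$ identical in the two planes. Your only addition is spelling out the abutting-IR tiling step, which the paper simply asserts.
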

\begin{proof}
The proof for Proposition \ref{proposition5} has obtained $P_{z,k}=1$ when $t_{slot}=2t_f$ for both horizontal transmissions and vertical transmissions, since there are neither CFR nor DIR in this case. Therefore, the successful transmission probabilities and throughput are the same.
\IEEEQEDhere
\end{proof}

\section{Simulation Results and Discussions}\label{sect:sim}
This section uses  the simulation results in NS3 to verify the analytical results in Section \ref{sect:collisions} and Section \ref{sect:perf}, and to study the performance of the space-time coupling MAC in UANs. 

In the simulation, we use UniformRandomVariable to generate the coordinates of nodes.
The maximal propagation range of horizontal transmission is 1500 meters (i.e., one-second propagation time at the speed of 1500 m/s), the data rate is 4000 bps. The the packet size determines the packet duration, e.g., when the packet size is 500 bytes, the transmission duration for a packet is 1 second at 4000 bps.


\subsection{Verification of $P_{k}$, $P_{z,k}$, $P_{o,k}$ and $P_{s,k}$}
We first verify the probabilities of $P_{k}$, $P_{z,k}$, $P_{o,k}$ and $P_{s,k}$. In the simulations, two nodes are randomly deployed in the coverage area of the receiver. The first node is taken as the sending node, and the other is the interference node. We carry out 100,000 runs of simulations to obtain the probabilities. Both cases of CFR and DIR are considered using $\beta=0.4$.


\subsubsection{CFR case} 
In the simulation,  $t_{f}$ is set to $0.1\tau$, which is smaller than the guard duration, i.e., $t_f<\beta\tau$. Then, $M$ is 2 and $K$ is 5. According to the Proposition \ref{proposition3}, the CFR exists.

	\begin{table}[!ht]
	  \caption{$P_{k}$, $P_{z,k}$ and $P_{s,k}$ in the CFR case.}
	  \centering
	  \setlength{\tabcolsep}{1.5mm}{
	  \begin{tabular}{|c|c|c|c|c|c|c|} 
	    \hline
	    \multicolumn{7}{|c|}{\textbf{Distance segment index}}\\  
	    \hline
	    \multicolumn{2}{|c|}{$k$}            & 1       & 2         & 3         & 4         & 5  \\
	    \hline
	    \multicolumn{2}{|c|}{$\mathbf{D_{k}}$}        & [0,0.1] & [0.1,0.4] & [0.4,0.6] & [0.6,0.9] & [0.9,1]\\
	    \hline
	    \multicolumn{7}{|c|}{\textbf{$P_{k}$ Validation}}\\
	    \hline
	    \multicolumn{2}{|c|}{Theo $P_{k}$}   & 0.01   & 0.15   & 0.2    & 0.45   & 0.19 \\    
	    \hline
	    \hline
	    \multicolumn{2}{|c|}{Sim  $P_{k}$}   & 0.0097 & 0.1494 &0.1992  & 0.4522 & 0.1895 \\
	    \hline
	    & \textbf{IR of} &\multicolumn{5}{|c|}{\textbf{$P_{z,k}$ Validation}}\\
	    \hline   
	    {Theo $P_{z,k}^{-2}$} & $\Delta m = -2$     & $\backslash$ & $\backslash$ &  $\backslash$ &  $\backslash$ & 0.0033 \\  
	    \hline   
	    {Theo $P_{z,k}^{-1}$} & $\Delta m = -1$     & $\backslash$ & $\backslash$ & 0.0141 & 0.1007 & 0.1799 \\ 
	    \hline   
	    {Theo $P_{z,k}^{0}$} & $\Delta m =  0$     & 0.0282  & 0.11   & 0.2015 & 0.3007 & 0.2771 \\ 
	    \hline   
	    {Theo $P_{z,k}^{1}$} & $\Delta m=  1$     & 0.2265  & 0.31   & 0.18   & $\backslash$ &  $\backslash$ \\ 
	    \hline   
	    {Theo $P_{z,k}^{2}$} & $\Delta m =  2$     & 0.0657  & $\backslash$ & $\backslash$ & $\backslash$ & $\backslash$ \\ 
	    \hline
	    \hline   
	    {Sim $P_{z,k}^{-2}$} & $\Delta m = -2$     & $\backslash$ & $\backslash$ & $\backslash$ &  $\backslash$ & 0.0036 \\  
	    \hline   
	    {Sim $P_{z,k}^{-1}$} & $\Delta m = -1$     &  $\backslash$ & $\backslash$ & 0.0155 & 0.1043 & 0.1806 \\ 
	    \hline   
	    {Sim $P_{z,k}^{0}$} & $\Delta m =  0$     & 0.0339  & 0.1139 & 0.2013 & 0.3050 & 0.2722 \\ 
	    \hline   
	    {Sim $P_{z,k}^{1}$} & $\Delta m =  1$     & 0.2148  & 0.3104 & 0.171  &  $\backslash$ & $\backslash$  \\ 
	    \hline   
	    {Sim $P_{k}^{2}$} & $\Delta m =  2$     & 0.0606  &  $\backslash$ &  $\backslash$ &  $\backslash$ &  $\backslash$\\    
	    \hline
	    \multicolumn{7}{|c|}{\textbf{$P_{s,k}$ Validation}}\\
	    \hline
	    \multicolumn{2}{|c|}{Theo $P_{s,k}$}   & 0.9291  & 0.9071 & 0.9125 & 0.9112 & 0.8982 \\    
	    \hline
	    \hline
	    \multicolumn{2}{|c|}{Sim  $P_{s,k}$}   & 0.9316  & 0.9071 & 0.9142 & 0.9106 & 0.8981 \\
	    \hline
	  \end{tabular}}
	  \label{table-distancesegment-case1}
	\end{table}

The theoretical and simulation results are summarized in Tab. \ref{table-distancesegment-case1}. It is shown that the values of $P_{k}$, $P_{z,k}$ and $P_{s,k}$ in the simulation are very close to the theoretic values, which verifies our analysis in Section \ref{sect:case}.

For any given region that the sending node falls in, the corresponding interference slots verify the conclusion in Proposition \ref{proposition3}. For example, when the sending node is in the segment $\mathbf{D_{1}}$, since $M=2$, only the current slot and the next two slots will collide with its current transmission, and the collision probabilities in other slots are zero. Note that the sum of collision probabilities over all the slots (i.e., $P_{z,k}=\sum P^{\Delta m}_{z,k}$)  is less than one. This indicates that there exist some regions where the other nodes can send simultaneously in any slot without collisions, which verifies the existence of CFRs. 

	
  \subsubsection{DIR case} 
 The packet duration $t_{f}$ is set to 0.7$\tau$, which is greater than the guard duration, to generate the DIR case. In this case, $M$ is 1 and $K$ is 5. The theoretical and simulation results are summarized in Tab. \ref{table-distancesegment-case2}.
	\begin{table}[!ht]
	  \caption{$P_{k}$, $P_{z,k}$, $P_{o,k}$ and $P_{s,k}$ in the DIR case.}
	  \centering
	  \setlength{\tabcolsep}{0.55mm}{
	  \begin{tabular}{|c|c|c|c|c|c|c|} 
	    \hline
	    \multicolumn{7}{|c|}{\textbf{Distance segment index}}\\  
	    \hline
	    \multicolumn{2}{|c|}{$k$}            & 1       & 2         & 3         & 4         & 5  \\
	    \hline
	    \multicolumn{2}{|c|}{$\mathbf{D_{k}}$}   & [0,0.3] & [0.3,0.4] & [0.4,0.6] & [0.6,0.7] & [0.7,1]\\
	    \hline
	    \multicolumn{7}{|c|}{\textbf{$P_{k}$ Validation}}\\
	    \hline
	    \multicolumn{2}{|c|}{Theo $P_{k}$}   & 0.09   & 0.07   & 0.2    & 0.13   & 0.51 \\    
	    \hline
	    \hline
	    \multicolumn{2}{|c|}{Sim  $P_{k}$}   & 0.0887 & 0.0702 &0.2027  & 0.1298 & 0.5086 \\
	    \hline
	    & \textbf{IR of} &\multicolumn{5}{|c|}{\textbf{$P_{z,k}$ Validation}}\\
	    \hline   
	    {Theo $P_{z,k}^{-1}$} & $\Delta m=-1$    & $\backslash$ & $\backslash$ & 0.0141 & 0.0636 & 0.2097 \\ 
	    \hline   
	    {Theo $P_{z,k}^{0}$} & $\Delta m=0$    & 0.8086  & 1      & 1      & 1      & 0.9701 \\ 
	    \hline   
	    {Theo $P_{z,k}^{1}$} & $\Delta m=1$    & 0.6392  & 0.4341 & 0.18   & $\backslash$ & $\backslash$ \\ 
	    \hline
	    \hline     
	    {Sim $P_{z,k}^{-1}$} & $\Delta m=-1$     & $\backslash$ & $\backslash$ & 0.0147 & 0.0639 & 0.2155 \\ 
	    \hline   
	    {Sim $P_{z,k}^{0}$} & $\Delta m=0$     & 0.8124  &1        & 1      & 1      & 0.9681 \\ 
	    \hline   
	    {Sim $P_{z,k}^{1}$} & $\Delta m=1$     & 0.6306  & 0.4255 & 0.1769  & $\backslash$ & $\backslash$ \\ 
	    \hline
	    & \textbf{DIR between} &\multicolumn{5}{|c|}{\textbf{$P_{o,k}$ Validation}}\\
	    \hline   
	    {Theo $P_{o,k}^{-1,0}$} & $\Delta m=-1, \Delta m=0$    & $\backslash$ & $\backslash$ & 0.0141 & 0.0636 & 0.17911 \\ 
	    \hline   
	    {Theo $P_{o,k}^{0,1}$} & $\Delta m=0, \Delta m=1$    & 0.4478  & 0.4341 & 0.18   & $\backslash$ & $\backslash$ \\ 

	    \hline
	    \hline     
	    {Sim $P_{o,k}^{-1,0}$} & $\Delta m=-1, \Delta m=0$    & $\backslash$ & $\backslash$ & 0.0147 & 0.0639 & 0.1836 \\ 
	    \hline   
	    {Sim $P_{o,k}^{0,1}$} & $\Delta m=0, \Delta m=1$    & 0.4430  & 0.4255 & 0.1769 & $\backslash$ & $\backslash$ \\ 

	    \hline
	    \multicolumn{7}{|c|}{\textbf{$P_{s,k}$ Validation}}\\
	    \hline
	    \multicolumn{2}{|c|}{Theo $P_{s,k}$}   & 0.4676  & 0.4710 & 0.5296 & 0.5614 & 0.5331 \\    
	    \hline
	    \hline
	    \multicolumn{2}{|c|}{Sim  $P_{s,k}$}   & 0.4693  & 0.4708 & 0.5301 & 0.5617 & 0.5322 \\
	    \hline
	  \end{tabular}}
	  \label{table-distancesegment-case2}
	\end{table}

The simulation results in Tab. \ref{table-distancesegment-case2} also verify our analytical results. Specifically, we observe that $P_{z,k} >1$ at any segment $\mathbf{D_{k}}$, indicating the existence of DIRs (see Proposition \ref{proposition4}). We also observe that the probability $p_{o,k}^{\Delta m, \Delta m+1}$ of a DIR is always less than the probability $p_{z,k}^{\Delta m}$ of an interference slot, and $ P_{z,k}-1= P_{o,k}$. That means a DIR falls within a common area of the IRs of two consecutive slots, verifying Proposition \ref{proposition4}.

\subsection{Performance Discussions}

We next study the performance of Slotted-ALOHA by the NS3 simulator. We carry out 3000 runs of simulations to obtain the successful transmission probability $P_s$ and network throughput $T$.
We consider the impact of packet duration ($t_f$), guard coefficient ($\beta$), network density ($N$), and traffic load ($\lambda$) in both horizontal and vertical coverage.
In these simulations, we set $\alpha=1.5$ to simulate the vertical underwater acoustic channel. The setting of $\alpha=1$ corresponds to the horizontal underwater acoustic channel. 


\begin{figure*}[t]
\centering
\subfloat[Successful transmission probability versus packet duration.]{\includegraphics[width=3.3in]{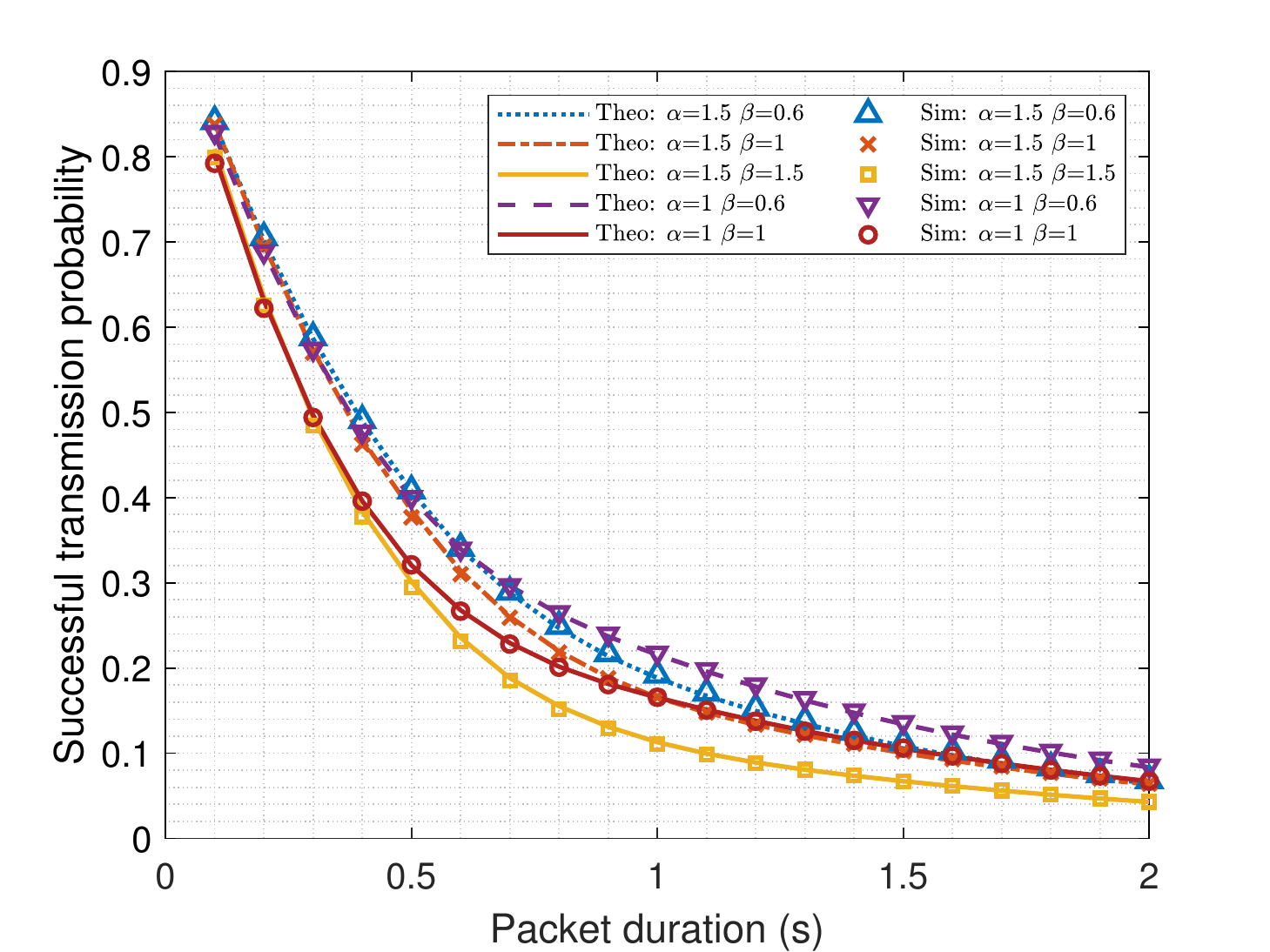}
\label{fig_Validation_VaryTf_ps}}
\hfil
\centering
\subfloat[Network throughput versus packet duration.]{\includegraphics[width=3.3in]{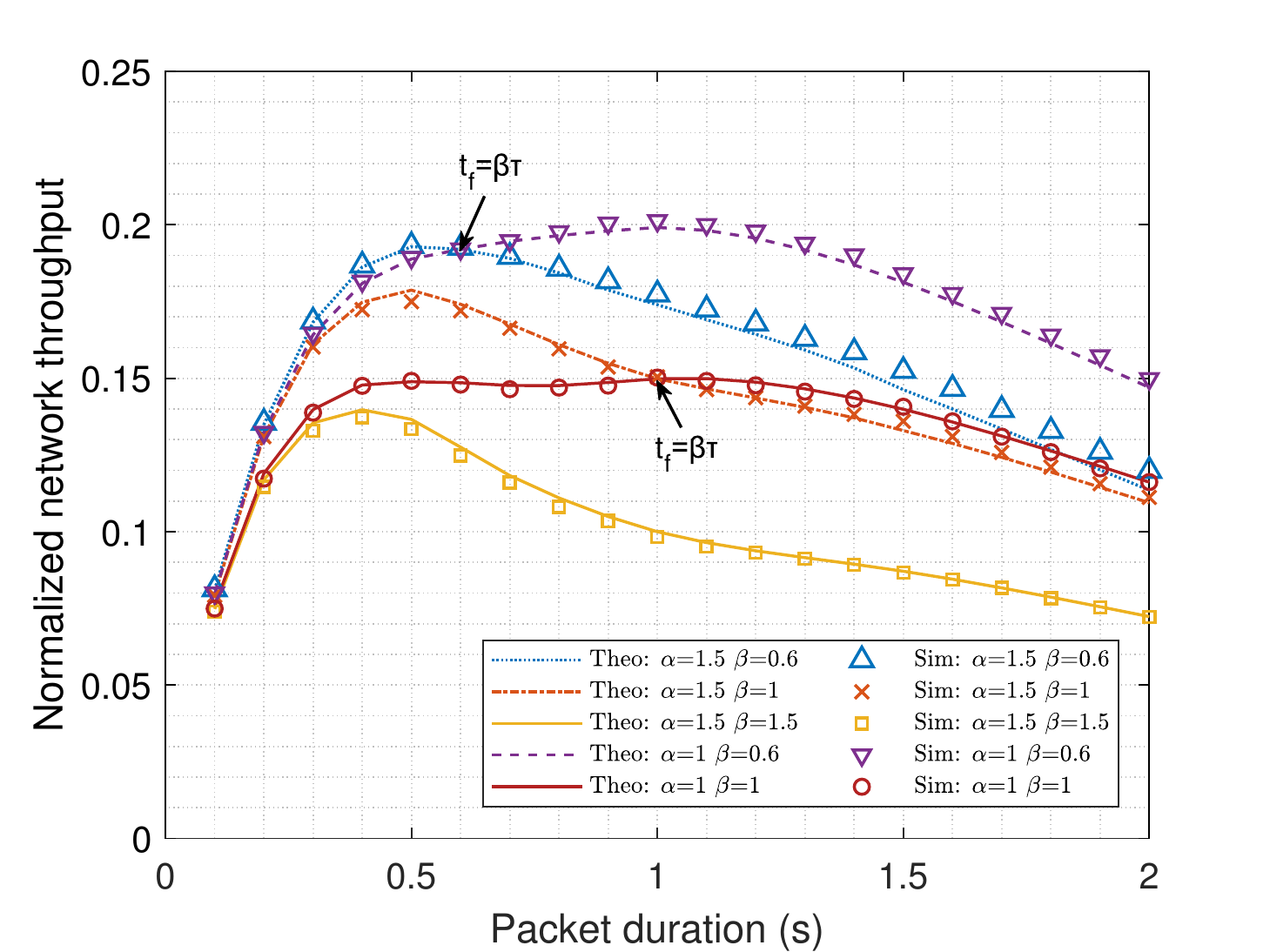}
\label{fig_validation_VaryTf_T}}
\caption{Impact of packet duration on the performance of Slotted-ALOHA.
}
\label{fig_Validation_VaryTf}
\end{figure*}

\begin{figure*}[t]
\centering
\subfloat[Successful transmission probability versus the value of $\beta$.]{\includegraphics[width=3.3in]{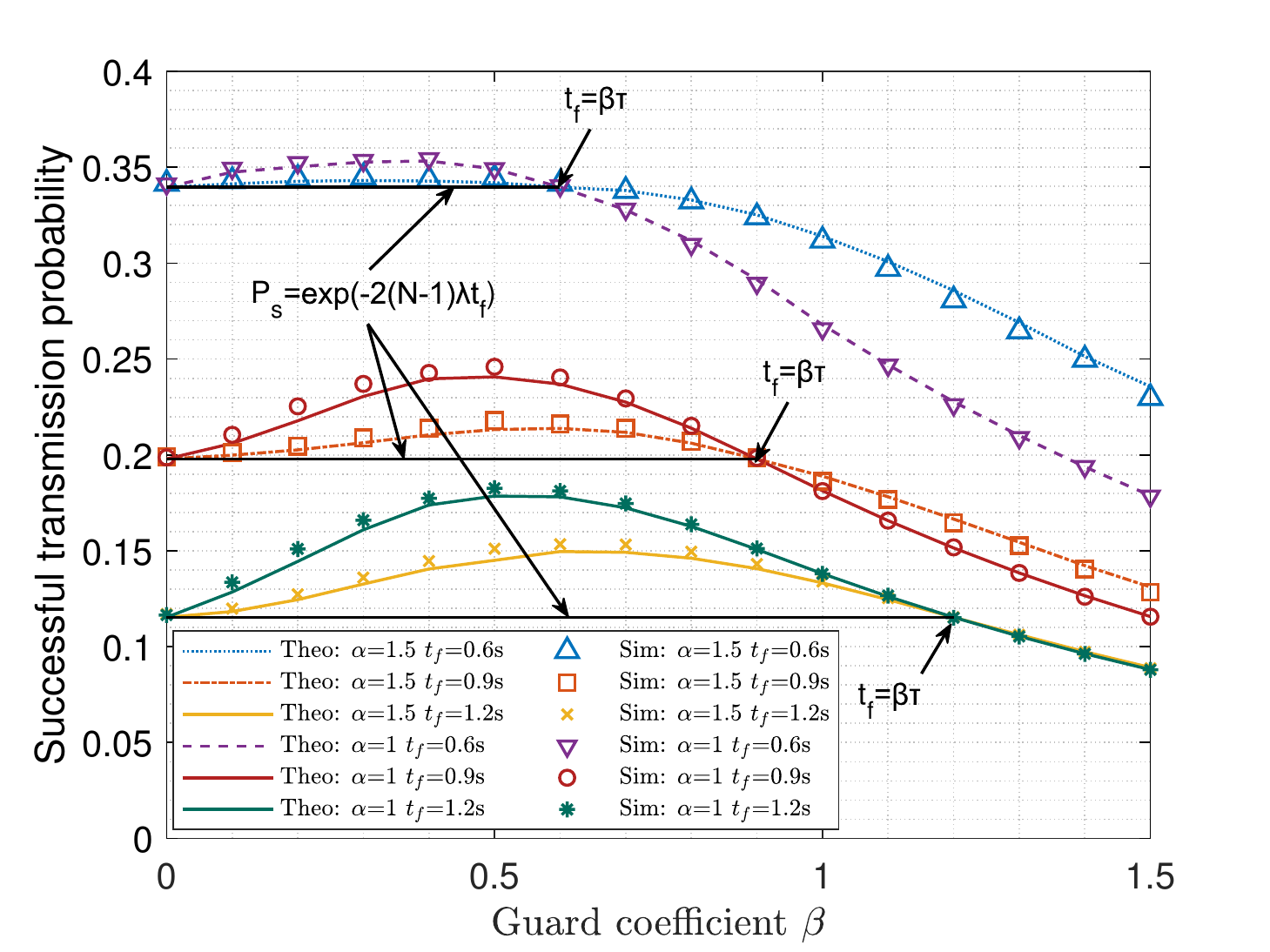}
\label{fig_Validation_VaryBeta_ps}}
\hfill
\subfloat[Network throughput versus the value of $\beta$.]{\includegraphics[width=3.3in]{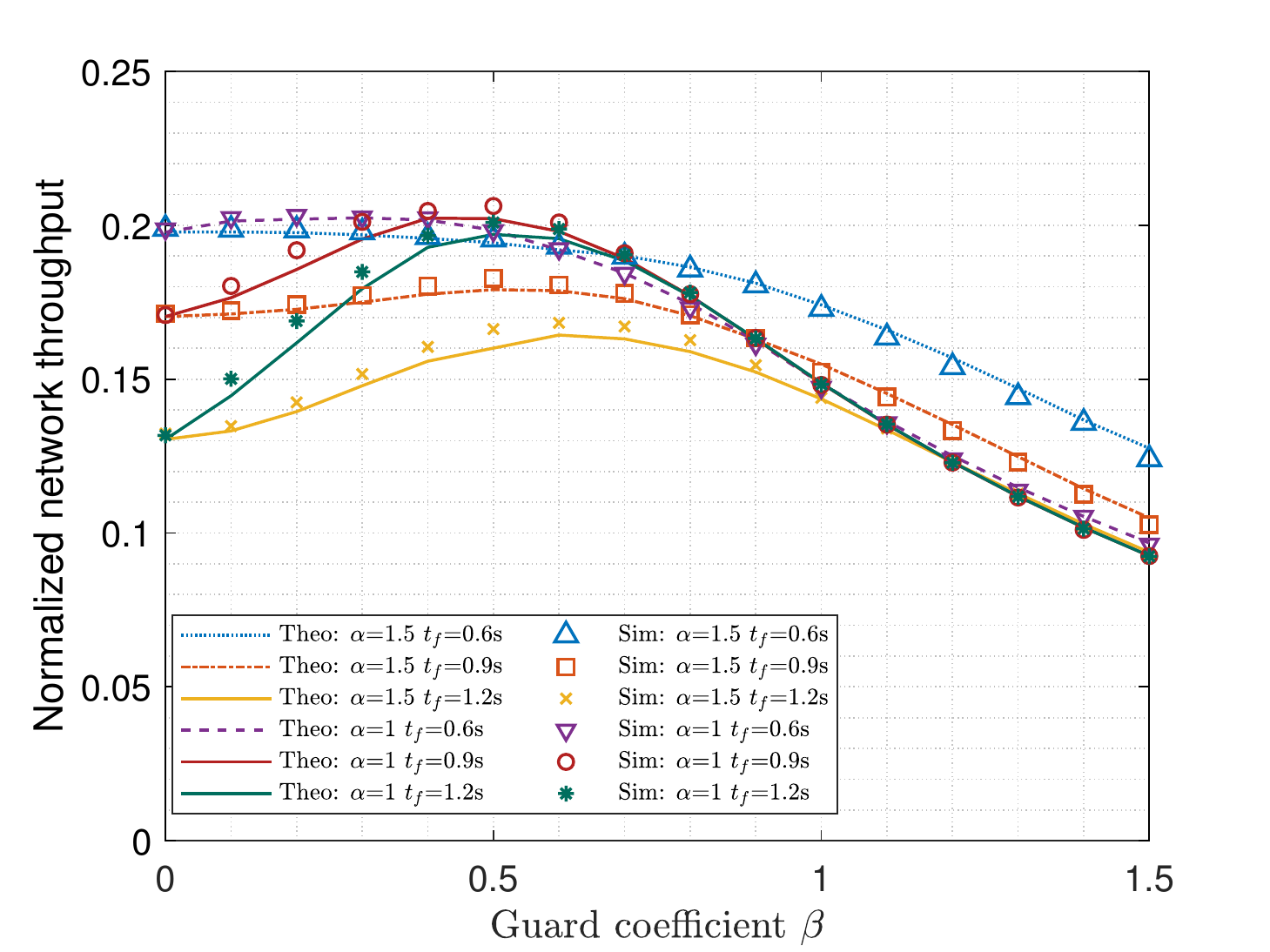}
\label{fig_Validation_VaryBeta_T}}
\caption{Impact of $\beta$ on the performance of Slotted-ALOHA.
}
\label{fig_Validation_VaryBeta}
\end{figure*}

\subsubsection{Impact of packet duration}
In the simulations, the number of underwater sensor nodes deployed in the network is 10, and the packet arrival rate for each node is $0.1$ packet/second.   Note that $\beta$ has to be not less than 1.5 for the vertical transmissions to eliminate inter-slot collisions. The coefficient $\beta$ is set to 0.6, 1, and 1.5 to show the cases with and without inter-slot collisions. 

The simulation results  in Fig. \ref{fig_Validation_VaryTf} fit well with the analytical results in \eqref{equa-total-successful-rate} and \eqref{equa_throughput}. Basically, in Fig. \ref{fig_Validation_VaryTf_ps}, the successful transmission probability monotonously decreases as the packet duration increases. On the one hand, a long packet duration has a long collision period. On the other hand, according to Fig. \ref{fig-interfering-segement}, a long packet duration will narrow CFR and move towards DIR. By comparing the curves corresponding to $\beta=0.6,1,1.5$ in Fig. \ref{fig_Validation_VaryTf}, it is observed that the slot length setting in the traditional Slotted-ALOHA (i.e., $t_{slot}=t_f+\tau$ for the horizontal channel or $t_{slot}=t_f+1.5\tau$ for the vertical channel) is not necessary to achieve a high successful transmission probability. This is because the long slot length postpones the transmission opportunities of nodes, and aggravates the channel competition in each slot.

We observe from Fig. \ref{fig_validation_VaryTf_T} that there exists an optimal packet duration to reach the peak throughput, which also verifies the motivation in \cite{Mandar-Variable-Duration} to adjust the packet duration. For the same $\alpha$, the throughput is large when $\beta$ is small. For example, the peak throughput in the case of $\alpha=1.5, \beta=0.6$ is achieved 35\% larger than that in the case of $\alpha=1.5, \beta=1.5$, which also shows the unnecessary of avoiding inter-slot collisions by using a long slot length in the space-time coupling UAN. 

It is worth noting in Fig. \ref{fig_Validation_VaryTf} that, when $t_{f}=\beta\tau$, the performance of vertical transmissions are same as horizontal transmissions (c.f. Proposition \ref{proposition6}). In this case, all the IRs just cover the whole communication range, and neither CFR nor DIR exists. 
We further observe that the vertical transmissions have better performances when the $t_{f}<\beta\tau$ and worse performances when $t_{f}>\beta\tau$ than the horizontal transmissions.
 The CFR and DIR have greater positive reward and negative penalty on the vertical transmission than on the horizontal transmission. This indicates that the vertical transmission is more sensitive to the spatial impact than the horizontal transmission.

\begin{figure*}
\centering
\subfloat[Successful transmission probability versus the amount of sender $N$.]{\includegraphics[width=3.3in]{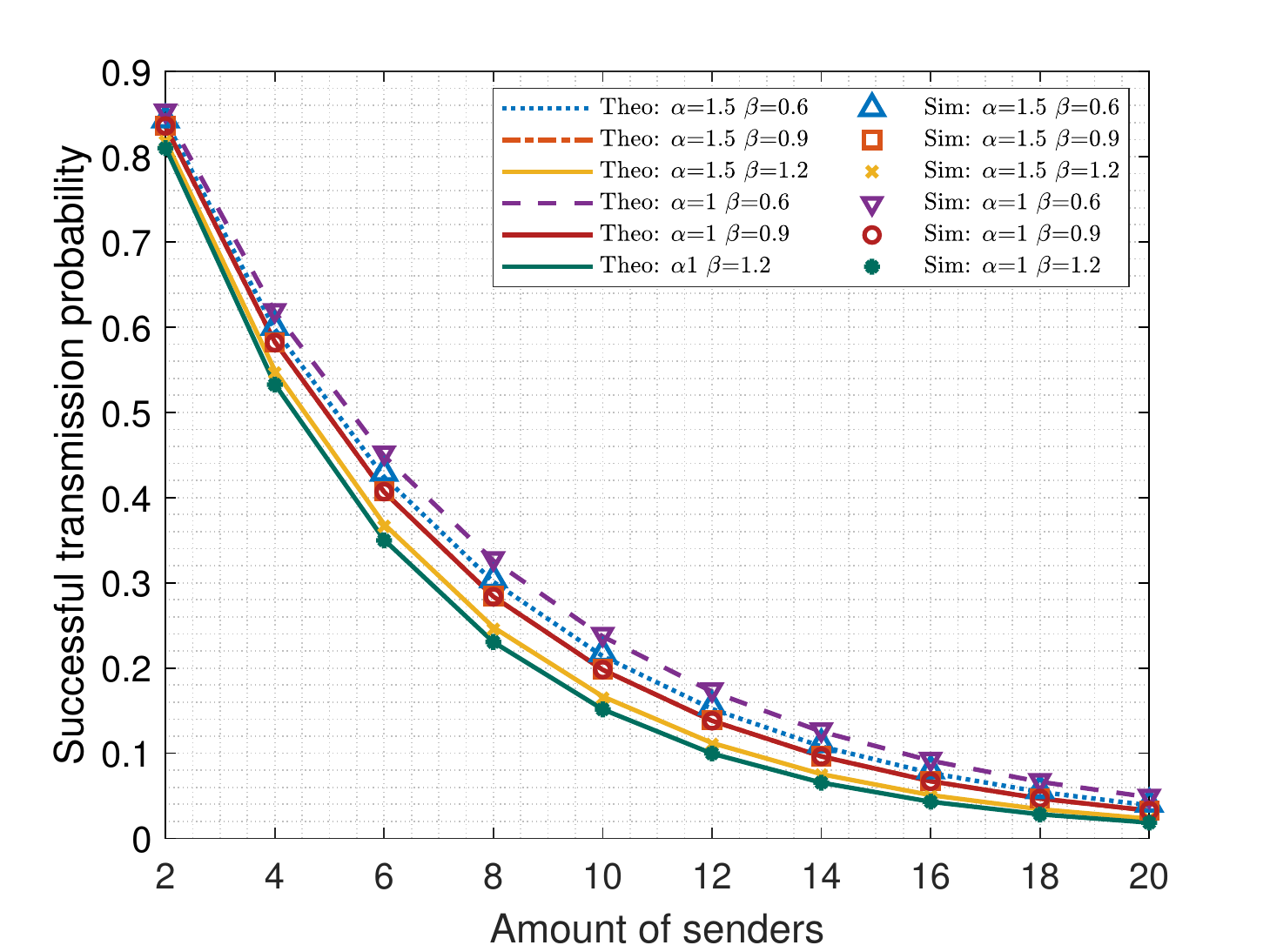}
\label{fig_Validation_VaryNt_ps}}
\hfill
\subfloat[Network throughput versus the amount of sender $N$.]{\includegraphics[width=3.3in]{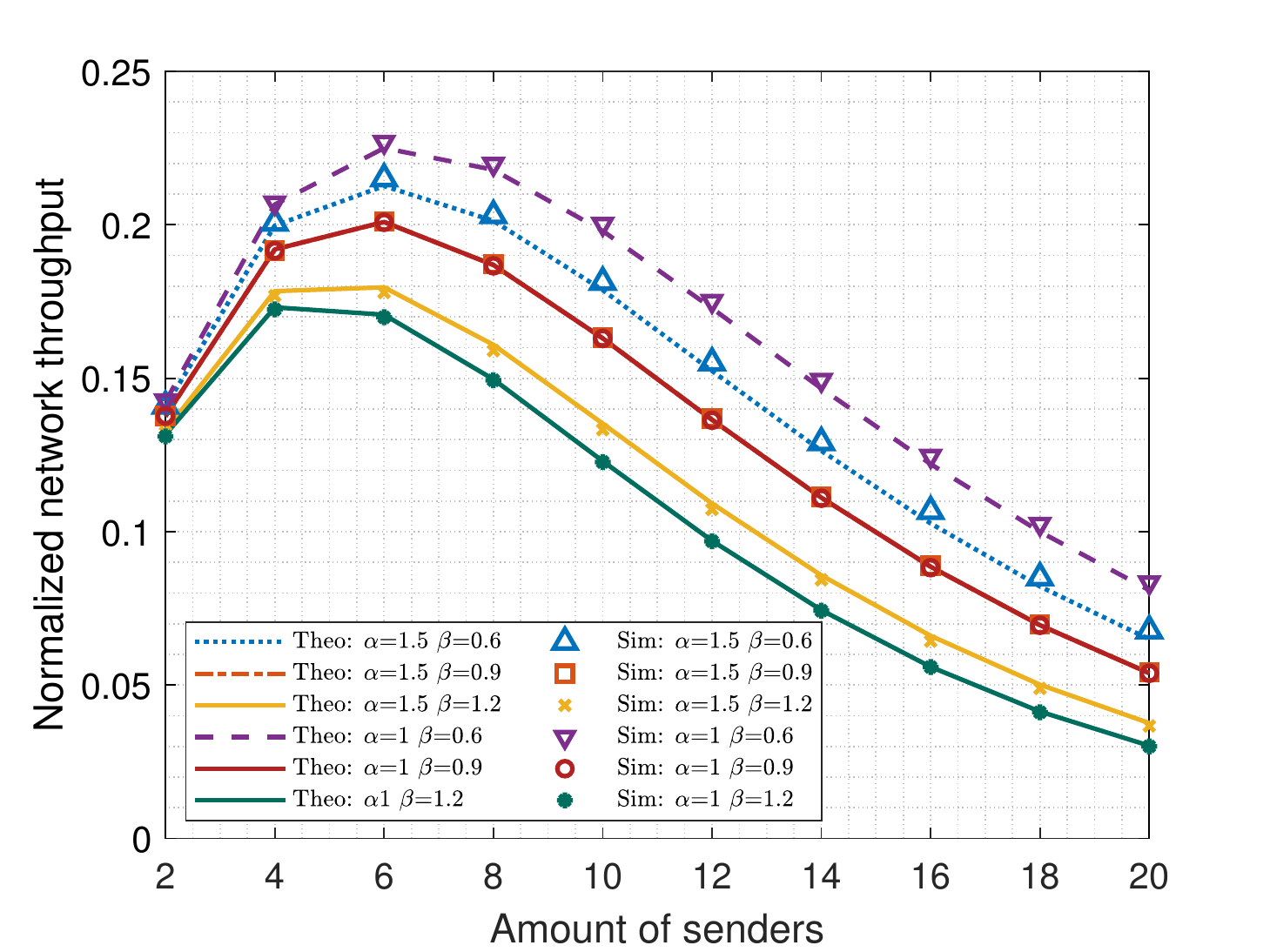}
\label{fig_Validation_VaryNt_T}}
\caption{Impact of network density on the performance of Slotted-ALOHA.
}
\label{fig_Validation_VaryNt}
\end{figure*}

\begin{figure*}
\centering
\subfloat[Successful transmission probability versus traffic load $\lambda$.]{\includegraphics[width=3.3in]{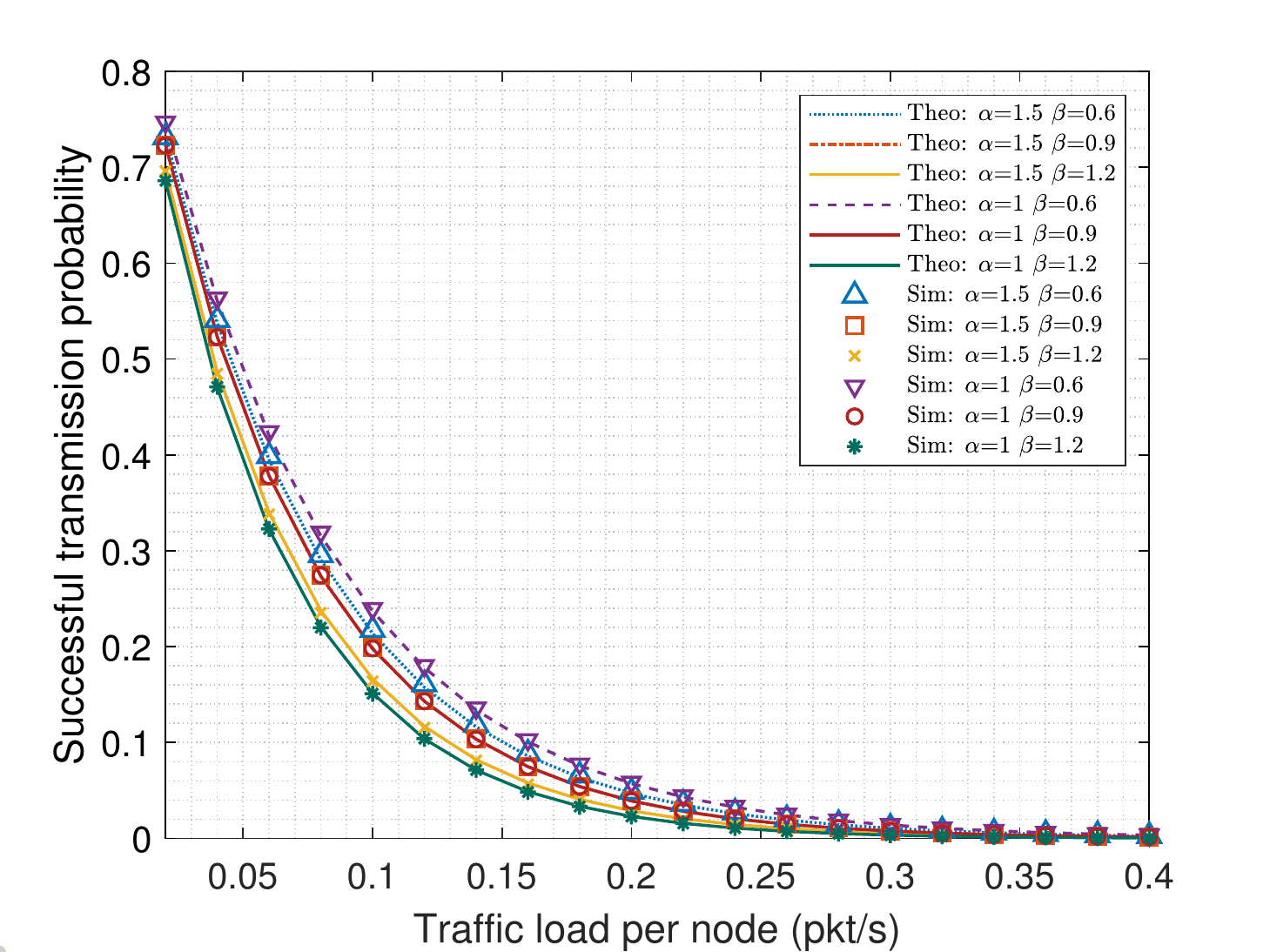}
\label{fig_Validation_VaryNload_ps}}
\hfil
\subfloat[Network throughput versus traffic load $\lambda$.]{\includegraphics[width=3.3in]{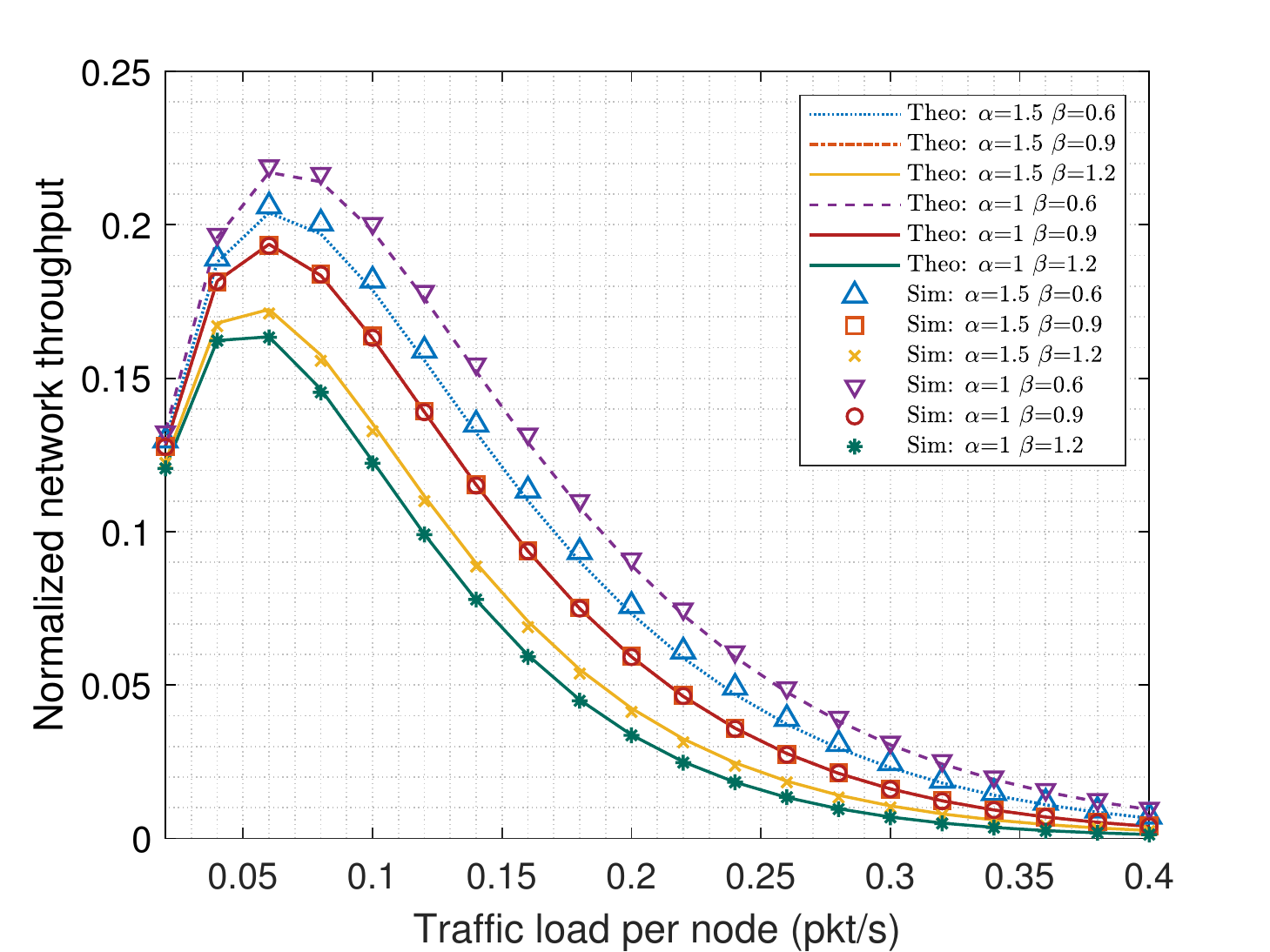}
\label{fig_Validation_VaryNload_T}}
\caption{Impact of network traffic loads on the performance of Slotted-ALOHA. 
}
\label{fig_Validation_VaryNload}
\end{figure*}

\subsubsection{Impact of $\beta$}
The coefficient $\beta$ determines the guard length and the slot length. It is observed from Fig. \ref{fig_Validation_VaryBeta} that the successful transmission probability and throughput are the same both when $\beta=0$ and $\beta\tau=t_f$, which verifies our Proposition \ref{proposition5}. Furthermore, there exists a maximizer for the guard interval in between ($0, t_f$). The successful transmission probability and throughput decrease when $\beta\tau>t_f$, which verifies our motivation that a long guard interval degrades the MAC performance.

The results in Fig. \ref{fig_Validation_VaryBeta} also show that the guard interval is unnecessary for a small packet, e.g., $t_f=0.6$ second in the figure. This is because the transmission duration mainly determines the collision period and the size of IRs as shown in Fig. \ref{fig-interfering-segement}.  A small transmission duration has already achieved a high successful transmission probability without guard interval (i.e., $\beta=0$). However, it necessitates the choice of the optimal $\beta$ to achieve the largest successful transmission probability and the highest network throughput as $t_f$ increases.  

Similar to Fig. \ref{fig_Validation_VaryTf}, it can be also observed that the vertical transmission is more sensitive to the guard interval $\beta$ than the horizontal transmission. 

\subsubsection{Impact of node density}
The increasing number of nodes in the network will aggravate the competition in channel sharing. In the next simulation, we set $\lambda=0.1$ packet/second, $t_{f}=\SI{0.9}{second}$, and $\beta\in \{0.6,0.9,1.2\}$ to study the impact of node density on network throughput. 

Fig. \ref{fig_Validation_VaryNt} shows that the successful transmission probability decreases with the increasing network density, due to the increasing channel competition. The throughput reaches its peak when the number of nodes is around 6. The reason is that the traffic load in the network is light when the number of nodes is small, while the channel gets congested as the number of nodes increases.



\subsubsection{Impact of traffic load}
We next study the impact of traffic loads. We fix the number of nodes to 10, and we set $t_{f}=\SI{0.9}{second}$ and $\beta\in \{0.6,0.9,1.2\}$. 

Similar results are observed in Fig. \ref{fig_Validation_VaryNload} as in Fig. \ref{fig_Validation_VaryNt}. That is, the increase of traffic loads will decline the successful transmission probability and there exists a peak throughput at a saturated network load of 0.07 packet/second per node. As the traffic load continues to increase, the network becomes congested. The transmissions then fail due to collisions, and the network throughput decreases to almost zero. 



\section{Conclusion}\label{sect:con}
We analyzed the impact of a unique feature of space-time coupling in UANs on slotted MAC in this paper. 
In contrast to TRNs, we found that the long propagation delay may result in transmission collisions over multiple slots, depending on the coupling factors in space and time dimensions, including the slot length, the packet duration, the sending slot, and the spatial locations of nodes. There exist not only intra-slot collisions but also inter-slot collisions in UANs, which degrades slotted MAC into pure MAC. This paper also found that the slot-dependent interference region could be annulus around the receiver. Collision-free regions (CFRs) exist when the slot length is larger than two times of the packet duration. 
This paper found that the successful transmission probabilities are the same when the slot length is set to one packet duration and is set to two packet durations. The peak successful transmission probabilities could be achieved by a slot less than two packet durations, which is much smaller than the existing slot setting. Although there exists inconsistency in vertical and horizontal transmissions, their performances are exactly the same when neither CRF and nor DIR exists by a slot length of two packet durations. Simulations have been carried out to verify the findings in this paper. 
Simulations also found that the long and non-isotropic coverage have greater positive reward and negative penalty on the vertical transmission than on the horizontal transmission. The findings in this paper will provide elaborative guidelines for the MAC design in UANs.

%

\bibliographystyle{IEEEtran} 
\bibliography{journal_ref}
%

\appendices

\section{Calculations for $S^{\Delta m}_{1,k}$ and $S^{\Delta m}_{2,k}$} \label{appendix:S}
Due to the anisotropic feature in the horizontal and vertical transmissions, we define 
\begin{equation}
\widetilde{S}(r)=\left\{\begin{matrix}
  \pi r^2,&0\leq r\leq R, \\
  2\alpha R^{2}\arcsin{\frac{1}{R}\sqrt{\frac{r^{2}-R^{2}}{\alpha^{2}-1}}}+ & \\
  2r^{2}\arccos{\frac{\alpha}{r}\sqrt{\frac{r^{2}-R^{2}}{\alpha^{2}-1}}},&R< r\leq \alpha R.
\end{matrix}\right.
\end{equation}
The following function is then introduced, 
\begin{equation}
S_{o}(a,b)=\widetilde{S}(b)-\widetilde{S}(a),
\end{equation}
where $0\leq a\leq b\leq \Lambda$, and $\Lambda$ equals to $R$ in horizontal transmission and equals to $\alpha R$ in vertical transmission.

The area of an IR or a DIR can be calculated using $S_{o}(a,b)$:
\begin{itemize}
\item For the area of an IR, we have $S^{\Delta m}_{1,k}=S_{o}(a_{1},b_{1})$, where $a_{1}=\max(d_{i}+v(\Delta m\cdot t_{slot}-t_{f}),0)$ and $b_{1}=\min(d_{i}+v(\Delta m\cdot t_{slot}+t_{f}),\Lambda)$.
\item For the area of a DIR, we have $S^{\Delta m}_{2,k}=S_{o}(a_{2},b_{2})$, where $a_{2}=\max(d_{i}+v( (m+1)\cdot t_{slot}-t_{f}),0)$ and  $b_{2}=\min(d_{i}+v(\Delta m\cdot t_{slot}+t_{f}),\Lambda)$.
\end{itemize}

%

\end{document}